\def\br#1{\left(#1\right)}
\def\Br#1{\left[#1\right]}
\def\<#1,#2>{\langle #1,#2 \rangle}
\def\bothID{\rlap{\hbox to.97\wd0{\hss\vrule height.06\ht0 width.82\wd0}}
\copy0\rlap{\kern-.36\wd0\vrule height1.05\ht0 width.05\ht0}\kern.14\wd0}
\newtheorem{theorem}{Theorem}
\newtheorem{defi}[theorem]{Definition}
\newtheorem{corollary}[theorem]{Corollary}
\newtheorem{ex}{Example}[section]
\newtheorem{proposition}[theorem]{Proposition}
\newtheorem{remark}[theorem]{Remark}
\newcommand\reallywidehat[1]{%
\savestack{\tmpbox}{\stretchto{%
  \scaleto{%
    \scalerel*[\widthof{\ensuremath{#1}}]{\kern-.6pt\bigwedge\kern-.6pt}%
    {\rule[-\textheight/2]{1ex}{\textheight}}
  }{\textheight}%
}{0.5ex}}%
\stackon[1pt]{#1}{\tmpbox}%
}
\def\br#1{\left(#1\right)}
\def\Br#1{\left[#1\right]}
 \DeclareMathOperator{\vari}{Var}
 \DeclareMathOperator{\Cov}{Cov}
\DeclareMathOperator{\VCM}{VCM}
\begin{document}

\title{Can You Hear the Shape of a Market? Geometric Arbitrage and Spectral Theory}

\author{Simone Farinelli\\
        Core Dynamics GmbH\\
        Scheuchzerstrasse 43\\
        CH-8006 Zurich\\
        Email: simone@coredynamics.ch\\and\\
        Hideyuki Takada\\
        Department of Information Science\\
        Narashino Campus, Toho University\\
        2-2-1-Miyama, Funabashi-Shi\\ J-274-8510 Chiba\\
        Email: hideyuki.takada@is.sci.toho-u.ac.jp
        }

\maketitle

\begin{abstract}
Utilizing gauge symmetries, the Geometric Arbitrage Theory reformulates any asset model, allowing for arbitrage by means of a stochastic principal fibre bundle with a connection  whose curvature measures the ``instantaneous arbitrage capability''. The cash flow bundle is the associated vector bundle. The zero eigenspace of its connection Laplacian parameterizes all risk-neutral measures equivalent to the statistical one. A market satisfies the No-Free-Lunch-with-Vanishing-Risk (NFLVR) condition if and only if $0$ is in the discrete spectrum of the Laplacian. The Jarrow--Protter--Shimbo theory of asset bubbles and their classification and decomposition extend to markets not satisfying the NFLVR. Euler's characteristic of the asset nominal space and  non-vanishing of the homology group of the cash flow bundle are both topological obstructions to NFLVR.\\\\
\textit{Keywords:} arbitrage markets; stochastic differential geometry; spectral theory; topological obstructions to arbitrage; asset bubbles and their decomposition.\\\\ 
\textit{AMS:} 91G80 (primary);60G46,58C40 (secondary).
\end{abstract}
\tableofcontents

\section{Introduction}
This paper develops a conceptual structure---called Geometric Arbitrage Theory (GAT)---to
link arbitrage modeling in generic markets with spectral~theory.\par
GAT rephrases classical stochastic finance in stochastic differential geometric
terms in order to characterize arbitrage.
 The main idea of the GAT approach
consists of modeling markets made of basic financial instruments
together with their term structures as principal fibre bundles.
The financial features of this market---such as no-arbitrage and
equilibrium---are then characterized in terms of standard
differential geometric constructions---such as curvature---associated
with a natural connection in this fibre bundle.
Principal fibre bundle theory has been heavily exploited in
theoretical physics as the language in which laws of nature can be
best formulated by providing an invariant framework to describe
physical systems and their dynamics. These ideas can be carried
over to mathematical finance and economics. A~market is a
financial-economic system that can be described by an appropriate
principal fibre bundle. A~principle as the invariance of market
laws under a change of num\'{e}raire can be seen then as gauge
invariance. Concepts such as No-Free-Lunch-with-Vanishing-Risk (NFLVR) and No-Unbounded-Profit-with-Bounded-Risk (NUPBR) have a geometric characterization, which have
the Capital Asset Pricing Model (CAPM) as a~consequence.\par
The idea that gauge theories are the natural language
for describing economics was first proposed by Malaney and Weinstein
in the context of the economic index problem 
 \cite{Ma96,We06}. Ilinski \cite{Il00,Il01} and Young
\cite{Yo99} proposed to view arbitrage as the curvature of a
gauge connection, in~analogy to some physical theories.
Independently, Cliff and Speed \cite{SmSp98} further developed
Flesaker and Hughston's seminal work \cite{FlHu96} and utilized
techniques from differential geometry to reduce the complexity of asset models before
stochastic~modeling.\par
In recent years, the research on market models admitting arbitrage has started to receive some well-deserved attention, see~\cite{Ru13,HuPr15}. With~this paper, we aim to show that by means of the stochastic differential geometric approach, we can obtain true new results that were not accessible with classical~methods.

This paper is structured as follows. Section~\ref{section2} reviews classical stochastic finance and the Geometric Arbitrage Theory. Arbitrage is seen as the curvature of a principal fibre bundle representing the market which defines the
quantity of arbitrage associated with it. Proofs are omitted and can be found in~\cite{Fa15} and in~\cite{FaTa20}, where the Geometric Arbitrage Theory has been given a rigorous mathematical foundation utilizing the formal background of stochastic differential geometry as in
Elworthy \cite{El82}, Em\'{e}ry\cite{Em89}, Hackenbroch and Thalmaier \cite{HaTh94}, Hsu \cite{Hs02}, Schwartz \cite{Schw80} and
Stroock \cite{St00}.

In Section~\ref{section3} the relationship between arbitrage and spectral theory  is investigated.
The vector bundle associated with the principal fibre bundle represents the cash flow streams corresponding to the assets and carries a
covariant differentiation induced by the connection. The~connection Laplacian
under the Neumann boundary condition is a self-adjoint operator whose spectrum
contains $0$ if and only if the market model satisfies the NFLVR
condition. If~$0$ has simple multiplicity, then the market is complete, and~vice versa. The~eigenspace of the eigenvalue $0$
contains all candidates for the Radon--Nikodym derivative of a possible risk-neutral measure with respect to the statistical~measure.

The Jarrow--Protter--Shimbo theory of asset bubbles for complete no-arbitrage markets is extended to markets allowing for arbitrage opportunities. The classification of bubbles and their decomposition results 
 are proved for models not necessarily satisfying the NFLVR condition. The~connections with Platen--Heath real-world pricing are~highlighted.

In Section~\ref{section4} the cash flow bundle twisted with the exterior algebra of the asset nominal space is a Dirac bundle on which we can apply the Atiyah--Singer index theorem,
which takes the form of the Gauss--Bonnet--Chern theorem for manifolds with boundary, from~which we can infer a topological obstruction to the NFLVR condition, the~non-vanishing of the Euler characteristic of the asset nominal space.
By means of the Bochner-Weitzenb\"ock formula, we obtain another characterization of NFLVR, namely the presence of $0$ in the discrete spectrum of the Dirac Laplacian for the twisted cash flow bundle.
Moreover, we obtain another topological obstruction to NFLVR, namely the non-vanishing of the homology group of the cash flow~bundle.
\par Section~\ref{section5} concludes and
Appendix \ref{Derivatives} reviews Nelson's stochastic derivatives.  

\section{Geometric Arbitrage Theory~Background}\label{section2}
In this section, we explain the main concepts of the Geometric Arbitrage Theory introduced
in~\cite{Fa15}, to~which we refer for proofs and~examples.
\subsection{The Classical Market~Model}\label{StochasticPrelude}
In this subsection, we summarize the classical set up, which
will be rephrased in Section~\ref{foundations}  in differential
geometric terms. We basically follow~\cite{HuKe04} and the ultimate
reference~\cite{DeSc08}.\par We assume continuous-time trading and
that the set of trading dates is $[0,+\infty[$
. This assumption is
general enough to embed the cases of finite and infinite discrete
times as well as one with a finite horizon in continuous time.
Note that while it is true that in the real world trading occurs only at
discrete times, these are not known a priori and can
virtually be any point in the time continuum. This motivates the
technical effort of continuous-time stochastic finance.\par The
uncertainty is modelled by a filtered probability space
$(\Omega,\mathcal{A}, \mathbb{P})$, where $\mathbb{P}$ is the
statistical (physical) probability measure,
$\mathcal{A}=\{\mathcal{A}_t\}_{t\in[0,+\infty[}$ an increasing
family of sub-$\sigma$-algebras of $\mathcal{A}_{\infty}$, and
$(\Omega,\mathcal{A}_{\infty}, \mathbb{P})$ is a probability space.
The filtration $\mathcal{A}$ is assumed to satisfy the usual
conditions, that~is:
\begin{itemize}
\item right continuity: $\mathcal{A}_t=\bigcap_{s>t}\mathcal{A}_s$ for all $t\in[0,+\infty[$;
\item $\mathcal{A}_0$ contains all null sets of $\mathcal{A}_{\infty}$.
\end{itemize}

The market consists of finitely many \textbf{assets} 
 indexed by
$j=1,\dots,N$, whose \textbf{nominal prices} are given by the
vector valued semimartingale $S:[0,+\infty[\times\Omega\rightarrow\mathbb{R}^N$,
denoted by $(S_t)_{t\in[0,+\infty[}$ and adapted to the filtration $\mathcal{A}$.
The stochastic process $(S^ j_t)_{t\in[0,+\infty[}$ describes the
price at time $t$ of the $j$th asset in terms of  unit of cash
\textit{at time $t=0$}. More precisely, we assume the existence of a
$0$th asset---the~\textbf{cash}, a~strictly positive
semimartingale that evolves according to
$S_t^0=\exp(\int_0^tdu\,r^0_u)$, where the integrable
semimartingale $(r^0_t)_{t\in[0,+\infty[}$ represents the
continuous interest rate provided by the cash account. One always
knows in advance what the interest rate on one's own bank account
is, but~this can change from time to time. The~cash account is
therefore considered the locally riskless asset in contrast to
the other assets, the~risky ones. Subsequently, we will mainly
utilize \textbf{discounted prices}, defined as
$\hat{S}_t^j:=S_t^j/S^{0}_t$, representing the asset prices in
terms of a \textit{current} unit of~cash.\par
 We remark that there is no need to
assume that asset prices are positive. However, there must be at least
one strictly positive asset, which in~our case is the cash. If~we want to
renormalize the prices by choosing another asset instead of the
cash as reference, i.e.,~by making it our
\textbf{num\'{e}raire}, then this asset must have a strictly
positive price process. More precisely, a~generic num\'{e}raire
is a portfolio of the original assets $j=0,1,2,\dots,N$, whose nominal price is represented by a strictly
positive stochastic process $(B_t)_{t\in[0,+\infty[}$. The~discounted prices of the original
assets are  then represented in terms of the num\'{e}raire by the
semimartingales $\hat{S}_t^j:=S_t^j/B_t$.\par We assume that there
are no transaction costs and that short sales are allowed. Remark
that the absence of transaction costs can be a serious limitation
for a realistic model. The~filtration $\mathcal{A}$ is not
necessarily generated by the price process
$(S_t)_{t\in[0,+\infty[}$; sources of information other than
prices are allowed. All agents have access to the same information
structure, that is, to the filtration $\mathcal{A}$.\par
Let $v$ be a positive real number.
A $v$-admissible \textbf{strategy} $x=(x_t)_{t\in[0,+\infty[}$ is a predictable $S$-integrable process for which the It\^{o} integral $\int_0^tx\cdot dS\ge-v$ a.s. for all $t\ge0$ with $x_0=0$. A~strategy is admissible if it is $v$-admissible for some $v\ge0$.

\begin{defi}[\textbf{Arbitrage}] Let the process $(S_t)_{[0,+\infty[}$ be a semimartingale and $(x_t)_{t\in[0,+\infty[}$ be an admissible self-financing strategy. Let us consider trading up to time $T\le\infty$. The~portfolio wealth at time $t$ is given by $V_{t}(x):=V_0+\int_0^tx_u\cdot dS_u$, and~we denote as $K_0$ the subset of $L^0(\Omega, \mathcal{A}_{T},P)$ containing all such $V_T(x)$, where $x$ is any admissible self-financing strategy.
We~define the following:
\begin{itemize}
\item $C_0:=K_0-L_+^0(\Omega, \mathcal{A}_{T},P)$;
\item $C:=C_0\cap L^{\infty}(\Omega, \mathcal{A}_{T},P)$;
\item $\bar{C}$: the closure of $C$ in $L^{\infty}$ with respect to the norm topology;
\item $\mathcal{V}^{V_0}:=\left\{(V_{t})_{t\in[0,+\infty[}\,\big{|}\, V_t=V_t(x), \,\text{where } x \text{ is } V_0\text{-admissible} \right\}$;
\item $\mathcal{V}_T^{V_0}:=\left\{V_T\,\big{|}\,(V_{t})_{t\in[0,+\infty[}\in\mathcal{V}^{V_0}\right\}$; \\ terminal wealth for $V_0$-admissible self-financing strategies.
\end{itemize}
And let $L_+^{\infty}(\Omega, \mathcal{A}_{T},P)$ be the set of positive random variables in $L^{\infty}(\Omega, \mathcal{A}_{T},P)$. We say that $S$ satisfies:
\begin{itemize}
\item \textbf{(NA) no-arbitrage} if~and only if $C \cap L_+^{\infty}(\Omega, \mathcal{A}_{T},P)=\{0\}$;
\item \textbf{(NFLVR) no-free-lunch-with-vanishing-risk}  if~and only if $\bar{C} \cap L_+^{\infty}(\Omega, \mathcal{A}_{T},P)=\{0\}$;
\item \textbf{(NUPBR) no-unbounded-profit-with-bounded-risk} if~and only if $\mathcal{V}_T^{V_0}$ is \\ bounded in $L^0$ for some $V_0>0$.
\end{itemize}
\end{defi}

 The relationship between these three different types of arbitrage has been elucidated in~\cite{DeSc94} and in~\cite{Ka97}, with the proof of the following result.
\begin{theorem}
\begin{equation}
\text{(NFLVR)}\Leftrightarrow \text{(NA)}+\text{(NUPBR)}.
\end{equation}
\end{theorem}
\begin{theorem}[\textbf{First fundamental theorem of asset pricing}]
The market $(S,\mathcal{A})$ satisfies the NFLVR condition if and only if there exists an equivalent local martingale measure $P^*$.
\end{theorem}
\begin{remark}
In the first fundamental theorem of asset pricing, we just assumed that the price process $S$ is locally bounded. If~$S$ is bounded, then NFLVR is equivalent to the existence of a martingale measure.
But without this additional assumption, NFLVR only implies the existence of a \textit{local} martingale measure, i.e.,~a local martingale which is \textit{not} a martingale. This distinction is important because~the difference between a security price process being a strict local martingale versus a martingale under a probability $P^*$ is related to the existence of asset price bubbles.
\end{remark}
\begin{defi}[\textbf{Complete market}]
The market $(S,\mathcal{A})$ is \textbf{complete} on $[0,T]$ if for all contingent claims $C\in L_{+}(P^*,\mathcal{A}_T):=\{C:\Omega\rightarrow[0,+\infty[|\,C\text{ is }\mathcal{A}_T-\text{measurable, and }\mathbb{E}_0^{P^*}[|C|]<+\infty\}$ there exists an admissible self-financing strategy $x$ such that $C=V_T(x)$.
\end{defi}
\begin{theorem}[\textbf{Second fundamental theorem of asset pricing}]
Given that $(S,\mathcal{A})$ satisfies the NFLVR condition, the~market is complete on $[0,T]$ if and only if the equivalent local martingale $P^*$ is unique.
\end{theorem}
\begin{defi}[\textbf{Dominance}]
The j-th security $S^j=(S^j_t)_{t\in[0.T]}$ is \textbf{undominated} on $[0,T]$ if there is no admissible strategy $(x_t)_{t\in[0,T]}$ such that:
\begin{equation}
S^j_0+ (x\cdot S)_T\ge S^j_T\quad\text{ a.s. and }\quad P[S^j_0+ (x\cdot S)_T > S^j_T]>0.
\end{equation}
\noindent We say that $S$ satisfies the \textbf{no dominance} condition (ND) on $[0,T]$ if and only if each $S^j$, $j=0,1,\dots,N$ is undominated on $[0,T]$.
\end{defi}

\begin{defi}[\textbf{Economy}]
An \textbf{economy} consists of a market given by $(S,\mathcal{A})$ and a finite number of investors $k=1,\dots,K$ characterized by their beliefs, information, preferences, and endowment. Moreover, there is a single consumption good that is perishable. The~price of the consumption good in units of the cash account is denoted as $\Psi=(\Psi_t)_{t\in[0,T]}$. We assume that $\Psi$ is strictly positive. The~$k$-th investor is characterized by the following~quantities:
\begin{itemize}
 \item\textbf{Beliefs and information: } $(P_k,\mathcal{A})$. We assume that the investor's beliefs $P_k$ are equivalent to $P$. All investors have the same information filtration $\mathcal{A}$.
 \item\textbf{Utility function: } $U_k:[0,T]\times[0,+\infty[\rightarrow\mathbb{R}$ and $\mu$ is a probability measure on $[0,T]$ with $\mu(\{T\})>0$ such that for every $t$ in the support of $\mu$, the~function $U_k(t,\cdot)$ is strictly increasing. We also assume $\lim_{v\rightarrow +\infty}U_k(T,v)=+\infty$. The utility that agent $k$ derives from consuming $c_t\mu(dt)$ at each time $t\le T$ is as follows:
\begin{equation}
       \mathcal{U}_k(c)=\mathbb{E}^k_0\left[\int_0^TU_k(t,c_t)\mu(dt)\right],
     \end{equation}
     \noindent where $\mathbb{E}^k$ is the expectation with respect to $P_k$. Since $\mu(\{T\})>0$, the utility is strictly increasing in the final consumption $c_T$.
 \item\textbf{Initial wealth: } $v_k$. Given a trading strategy $x=(x^1,\cdots,x^N)$, the~investor will be required to choose his initial holding $x^0_0$ in the cash account such that:
\begin{equation}
     v_k=x^0_0+\sum_{j=1}^Nx^j_0S^i_0.
   \end{equation}
\item\textbf{Stochastic endowment stream: } $\epsilon^k_t$, $t<T$ of the commodity. This means that the investors receive  $\epsilon^k_t\mu(dt)$ units of the commodity at time $t\le T$. The~cumulative endowment of the k-th investor in~units of the cash account is given by the following equation:
\begin{equation}
      \mathcal{E}^k_t:=\int_0^t\Psi_s\epsilon^k_s\mu(ds).
    \end{equation}
\end{itemize}
\end{defi}

\begin{defi}[\textbf{Consumption plan and strategy}] A pair $(c^k_t,x^k_t)_{t\in[0,T]}$ is called \textbf{admissible} if $(c^k_t)_{t\in[0,T]}$ is progressively measurable with respect to the filtration $\mathcal{A}$, $(x^k_t)_{t\in[0,T]}$ is admissible in the usual sense, and~it generates a wealth process $V^k=(V^K_t)_{t\in[0,T]}$ with non-negative terminal wealth $V^k_T\ge0$.
\end{defi}

\begin{defi}[\textbf{Equilibrium}]\label{equilibrium}
Given an economy $(\{P_k\}_{k=1,\dots,K}, (\mathcal{A}_t)_{t\in[0,T]},\\ \{\epsilon_k\}_{k=1,\dots,K},$ $\{U_k\}_{k=1,\dots,K})$, a~consumption good price index $\Psi$, financial assets $S=[S^0,S^1,\dots,S^N]^{\dagger}$, and~investor consumption-investment plans $(\hat{c}^k,\hat{x}^k)$ for $k=1,\dots K$, the~pair $(\Psi,S)$ is an \textbf{equilibrium price process} if for all $t\le T$ $P-$a.s. the following conditions are~satisfied:
\begin{itemize}
\item\textbf{Securities markets clear:}
\begin{equation}
   \sum_{k=1}^K\hat{x}^{k,j}_t=\alpha^j\qquad(j=0,1,\dots,N),
  \end{equation}
\noindent where $\alpha^j$ is the aggregate net supply of the j-th security. It is assumed that each $\alpha^j$ is non-random and constant over time, with~$\alpha^0=0$ and $\alpha^j>0$ for $j=1,\dots,N$.
\item\textbf{Commodity markets clear: }
\begin{equation}
   \sum_{k=1}^K\hat{c}^k_t= \sum_{k=1}^K\epsilon^k_t.
  \end{equation}
\item\textbf{Investors' choices are optimal:} $(\hat{c}^k, \hat{x}^k)$ solves the k-th investor's utility maximization problem
\begin{equation}
    u_k(x):=\sup\{U_k(c)|\, c \text{ admissible consumption plan}, x^k =x\},
    \end{equation}
    \noindent and the optimal value is finite.
\end{itemize}
\end{defi}
\begin{defi}[\textbf{Efficiency}]
A market model given by $S$ is called \textbf{efficient} on $[0,T]$ with respect to $(\mathcal{A}_t)_{t\in[0,T]}$, i.e.,~(E), if~there exists a consumption good price index $\Psi$ and an economy $(\{P_k\}_{k=1,\dots,K}, (\mathcal{A}_t)_{t\in[0,T]}, \{\epsilon_k\}_{k=1,\dots,K}, \{U_k\}_{k=1,\dots,K})$, for~which $(\Psi,S)$ is an equilibrium price process on $[0,T]$.
\end{defi}

 In~\cite{JaLa12} and ~\cite{Ja12} we find the proof of the following result. 

\begin{theorem}[\textbf{Third fundamental theorem of asset pricing, characterization of efficiency}]
Let $(S, \mathcal{A})$ be a market. The~following statements are~equivalent:
\begin{itemize}
\item[(i)] (E): $(S, \mathcal{A})$ is efficient in $[0,T]$;
\item[(ii)] $(S, \mathcal{A})$ satisfies both (NFLVR) and (ND) on $[0,T]$;
\item[(iii)] (EMM): There exists a probability $P^*$ equivalent to $P$ such that $S$ is a $(P^*,\mathcal{A})$ martingale on $[0,T]$.
\end{itemize}
\end{theorem}

\subsection{Geometric Reformulation of the Market Model: Primitives}
We are going to introduce a more general representation of the
market model introduced in Section \ref{StochasticPrelude}, which
better suits the arbitrage modeling task.
\begin{defi}\label{defi1}
A \textbf{gauge} is an ordered pair of two $\mathcal{A}$-adapted
real valued semimartingales $(D, P)$, where
$D=(D_t)_{t\ge0}:[0,+\infty[\times\Omega\rightarrow\mathbb{R}$ is
called a \textbf{deflator} and
$P=(P_{t,s})_{t,s}:\mathcal{T}\times\Omega\rightarrow\mathbb{R}$,
which is called a \textbf{term structure}, is considered a
stochastic process with respect to the time $t$, termed
\textbf{valuation date}, and
$\mathcal{T}:=\{(t,s)\in[0,+\infty[^2\,|\,s\ge t\}$. The~parameter
$s\ge t$ is referred to as \textbf{maturity date}. The~following
properties must be satisfied a.s. for all $t, s$ such that $s\ge
t\ge 0$:
 \begin{itemize}
  \item [(i)] $P_{t,s}>0$;
  \item [(ii)] $P_{t,t}=1$.
 \end{itemize}
\end{defi}

\begin{remark}
Deflators and term structures can be considered
\textit{outside the context of fixed income.} An arbitrary
financial instrument is mapped to a gauge $(D, P)$ with the
following economic~interpretation:
\begin{itemize}
\item Deflator: $D_t$ is the value of the financial instrument at time $t$ expressed in terms of some num\'{e}raire. If~we
choose the cash account, the~$0$-th asset, as num\'{e}raire, then
we can set $D_t^j:=\hat{S}_t^j=\frac{S_t^j}{S_t^0}\quad(j=1,\dots
N)$.
\item Term structure: $P_{t,s}$ is the value at time $t$ (expressed in units of
deflator at time $t$) of a synthetic zero coupon bond with
maturity $s$ delivering one unit of financial instrument at time
$s$. It represents a term structure of forward prices with respect
to the chosen num\'{e}raire.
\end{itemize}

 We point out that there is no unique choice for
deflators and term structures describing an asset model. For~example, if~a set of deflators qualifies, then we can multiply
every deflator  by the same positive semimartingale to obtain
another suitable set of deflators. Of~course, term structures have
to be modified accordingly. The~term ``deflator'' is clearly
inspired by actuarial mathematics. In~the present context, it
refers to a nominal asset value division by a strictly positive
semimartingale (which can be the state price deflator, if this
exists, and it is made the num\'{e}raire). There is no need to
assume that a deflator is a positive process. However, if~we want
to make an asset our num\'{e}raire, then we have to make sure
that the corresponding deflator is a strictly positive stochastic
process.
\end{remark}

\subsection{Geometric Reformulation of the Market Model: Portfolios}\label{trans}
We would now like to introduce transforms of deflators and term structures
in order to group gauges containing the same (or less) stochastic
information. In this regard, we will consider \textit{deterministic}
linear combinations of assets modelled by the same gauge (e.g., zero
bonds of the same credit quality with different maturities).

\begin{defi}\label{gaugeTransforms2}
Let $\pi:[0, +\infty[\longrightarrow \mathbb{R}$ be a deterministic
cash flow intensity (possibly generalized) function. It induces a
\textbf{gauge transform} $(D,P)\mapsto
\pi(D,P):=(D,P)^{\pi}:=(D^{\pi}, P^{\pi})$ by the following formula:
\begin{equation}
 D_t^{\pi}:=D_t\int_0^{+\infty}dh\,\pi_h P_{t, t+h}\qquad
P_{t,s}^{\pi}:=\frac{\int_0^{+\infty}dh\,\pi_h P_{t,
s+h}}{\int_0^{+\infty}dh\,\pi_h P_{t, t+h}}.
\end{equation}
\end{defi}

\begin{proposition}\label{conv}
Gauge transforms induced by cash flow vectors have the following
property:
\begin{equation}((D,P)^{\pi})^{\nu}= ((D,P)^{\nu})^{\pi} = (D,P)^{\pi\ast\nu},\end{equation} where
$\ast$ denotes the convolution product of two cash flow vectors or
intensities respectively:
\begin{equation}\label{convdef}
    (\pi\ast\nu)_t:=\int_0^tdh\,\pi_h\nu_{t-h}.
\end{equation}
\end{proposition}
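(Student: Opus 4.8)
The plan is to verify the identity by direct substitution of the defining formulae in Definition \ref{gaugeTransforms2}, reducing each component (deflator and term structure) to a symmetric double integral in the two cashflow intensities, and then recognizing that double integral, via a single change of variables, as the transform associated to the convolution $\pi\ast\nu$ defined in (\ref{convdef}).

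First I would compute the deflator of $((D,P)^{\pi})^{\nu}$. Writing the outer $\nu$-transform of $D^{\pi}$ and then inserting the inner $\pi$-transform, the normalizing factor $\int_0^{+\infty}dh\,\pi_h P_{t,t+h}$ carried by $D^{\pi}_t$ cancels exactly against the denominator of $P^{\pi}_{t,t+k}$, leaving
$$(D^{\pi})^{\nu}_t = D_t\int_0^{+\infty}dk\,\nu_k\int_0^{+\infty}dh\,\pi_h\,P_{t,t+k+h}.$$
The key step is then the substitution $u=h+k$ combined with Fubini's theorem, which turns this into $\int_0^{+\infty}du\,P_{t,t+u}\int_0^{u}dh\,\pi_h\nu_{u-h}=\int_0^{+\infty}du\,(\pi\ast\nu)_u\,P_{t,t+u}$ by the very definition of the convolution, and this is exactly $D^{\pi\ast\nu}_t$.

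Next I would treat the term structure, which proceeds identically in spirit: in $(P^{\pi})^{\nu}_{t,s}$ the common factor $\int_0^{+\infty}dh\,\pi_h P_{t,t+h}$ appearing in every inner $\pi$-transform cancels between numerator and denominator, so that
$$(P^{\pi})^{\nu}_{t,s} = \frac{\int_0^{+\infty}dk\,\nu_k\int_0^{+\infty}dh\,\pi_h\,P_{t,s+k+h}}{\int_0^{+\infty}dk\,\nu_k\int_0^{+\infty}dh\,\pi_h\,P_{t,t+k+h}}.$$
Applying the same substitution $u=h+k$ to numerator and denominator converts both double integrals into single integrals against $(\pi\ast\nu)_u$, which yields precisely $P^{\pi\ast\nu}_{t,s}$. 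Together with the deflator computation this establishes $((D,P)^{\pi})^{\nu}=(D,P)^{\pi\ast\nu}$, and the remaining equality $((D,P)^{\nu})^{\pi}=(D,P)^{\pi\ast\nu}$ follows at once from the commutativity of convolution, $\pi\ast\nu=\nu\ast\pi$, itself obtained from the symmetry $\int_0^{u}dh\,\pi_h\nu_{u-h}=\int_0^{u}dh\,\nu_h\pi_{u-h}$ under $h\mapsto u-h$.

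The only delicate point, and the step I expect to require the most care, is the legitimacy of interchanging the order of integration and carrying out the shift of variables, since the cashflow intensities are allowed to be \emph{generalized} functions (for instance finite sums of Dirac masses modelling discrete coupon payments). I would therefore frame the argument for distributions supported on $[0,+\infty[$, where $\pi\ast\nu$ is well defined, is again supported on $[0,+\infty[$, and the Fubini-type rearrangement is simply the defining property of convolution tested against the continuous kernel $P_{t,\cdot}$; the positivity (i) and normalization (ii) in Definition \ref{defi1} guarantee that all denominators are strictly positive, so that the quotients defining the transforms make sense throughout.
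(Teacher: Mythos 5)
Your computation is correct. The paper itself omits the proof of this proposition (it defers all proofs of Section 2 to \cite{Fa14}), but the direct verification you give --- cancellation of the scalar normalizing factor $\int_0^{+\infty}dh\,\pi_h P_{t,t+h}$ between the inner transform and the quotients, reduction of both the deflator and the term structure to the double integral in $(h,k)$, and the substitution $u=h+k$ plus Fubini to recognize $\int_0^u dh\,\pi_h\nu_{u-h}=(\pi\ast\nu)_u$ --- is precisely the natural argument, with commutativity of convolution giving the remaining equality $((D,P)^{\nu})^{\pi}=(D,P)^{\pi\ast\nu}$. Your framing of the Fubini step as the defining property of the convolution of compactly supported distributions tested against the kernel $P_{t,\cdot}$ is the right way to cover the generalized (e.g.\ Dirac-mass) intensities, and is consistent with how the paper introduces $G=\mathcal{E}^{\prime}([0,+\infty[,\mathbb{R})$ with convolution as the semigroup operation.

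One quibble: your closing claim that properties (i)--(ii) of Definition \ref{defi1} guarantee strictly positive denominators is not quite right. Positivity of $P_{t,s}$ ensures $\int_0^{+\infty}dh\,\pi_h P_{t,t+h}>0$ only when $\pi$ is a \emph{positive} intensity or measure; for signed or general distributional $\pi$ this pairing can vanish, and its non-vanishing has to be assumed as part of the well-definedness of the gauge transform itself. This is a gap in the paper's definition rather than in your argument, which is otherwise complete.
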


 The convolution of two non-invertible gauge transforms is
non-invertible. The~convolution of a non-invertible with an
invertible gauge transform is~non-invertible.

\begin{defi}\label{int}  If the term structure is differentiable with respect to the maturity date, it can be written
 as a functional of the
\textbf{instantaneous forward rate } f defined as follows:
\begin{equation}
  f_{t,s}:=-\frac{\partial}{\partial s}\log P_{t,s},\quad
  P_{t,s}=\exp\left(-\int_t^sdhf_{t,h}\right),
\end{equation}
\noindent and
\begin{equation}
 r_t:=\lim_{s\rightarrow t^+}f_{t,s}
\end{equation}
\noindent is termed \textbf{short rate}.
\end{defi}

\begin{remark} The special choice of vanishing interest rate $r\equiv0$ or flat term structure
$P\equiv1$ for all assets corresponds to the classical model,
where only asset prices and their dynamics are relevant.
\end{remark}

\subsection{Arbitrage Theory in a Differential Geometric~Framework}\label{foundations} Now we are in the position to
rephrase the asset model presented in Section
\ref{StochasticPrelude} in terms of a natural geometric language.
Given $N$ base assets, we want to construct a
portfolio theory and study arbitrage, and thus we cannot a priori assume the existence of a
risk-neutral measure or a state price deflator. In~terms of
differential geometry, we will adopt the mathematician's and not
the physicist's approach. The~market model is seen as a principal
fibre bundle of the (deflator, term structure) pairs, discounting
and foreign exchange as a parallel transport, num\'{e}raire as the
global section of the gauge bundle, and arbitrage as curvature.  The~no-free-lunch-with-vanishing-risk condition is proved to be
equivalent to a zero-curvature~condition.

\subsubsection{Market Model as Principal Fibre~Bundle}
Let us consider, in continuous time, a market with $N$ assets and a num\'{e}raire. A~general portfolio at time $t$ is described by the vector of nominals $x\in \mathfrak{X}$, for~an open set $\mathfrak{X}\subset\mathbb{R}^N$. By~nominals $x^1,\dots,x^N$ we mean the number of assets that we hold in our portfolio. Following Definition \ref{defi1}, the~asset model consisting in $N$ synthetic zero bonds is described by means of the following gauges:
\begin{equation}(D^j,P^j)=((D_t^j)_{t\in[0, +\infty[},(P_{t,s}^j)_{s\ge t}),\end{equation}
\noindent where $D^j$ denotes the deflator and $P^j$ the term
structure. This can be written as follows:
\begin{equation}P_{t,s}^j=\exp\left(-\int_t^sf^j_{t,u}du\right),\end{equation}
where $f^j$ is the instantaneous forward rate process for the $j$-th asset and the corresponding short rate is given by $r_t^j:=\lim_{u\rightarrow 0^+}f^j_{t,u}$. For~a
portfolio with nominals $x\in \mathfrak{X}\subset\mathbb{R}^N$, we define:
\begin{equation}
D_t^x:=\sum_{j=1}^Nx_jD_t^j\qquad
f_{t,u}^x:=\sum_{j=1}^N\frac{x_jD_t^j}{\sum_{j=1}^Nx_jD_t^j}f_{t,u}^j\qquad
P_{t,s}^x:=\exp\left(-\int_t^sf^x_{t,u}du\right).
\end{equation}

The short rate writes as follows:
\begin{equation}
r_t^x:=\lim_{u\rightarrow 0^+}f^x_{t,u}=\sum_{j=1}^N\frac{x_jD_t^j}{\sum_{j=1}^Nx_jD_t^j}r_t^j.
\end{equation}

The image space of all possible strategies reads as follows:
\begin{equation}M:=\{(t,x)\in [0,+\infty[\times\mathfrak{X}\}.\end{equation}

In Section \ref{trans}, cash flow intensities and the corresponding
gauge transforms were introduced. They have the structure of an
Abelian semigroup:
\begin{equation}
 H:=\mathcal{E}^{\prime}([0,
+\infty[,\mathbb{R})=\{F\in\mathcal{D}^{\prime}([0,+\infty[)\mid
\text{supp}(F)\subset[0, +\infty[\text{ is compact}\},
\end{equation}
where the semigroup operation on distributions with compact support
is the convolution (see~\cite{Ho03}, Chapter IV), which extends the
convolution of regular functions as defined by Formula
(\ref{convdef}).
\begin{defi}\label{MFB}
The \textbf{Market Fibre Bundle} is defined as the fibre bundle of the following
gauges:
\begin{equation}
\mathcal{B}:=\{ ({D^x_t},{P^x_{t,\,\cdot}})^{\pi }|\,(t,x)\in
M, \pi\in G\}.
\end{equation}
\end{defi}
The cash flow intensities defining invertible transforms constitute
an Abelian group:
\begin{equation}
G:=\{\pi\in H|\text{ it exists } \nu\in H\text{ such that
}\pi\ast\nu=\delta\}\subset \mathcal{E}^{\prime}([0,
+\infty[,\mathbb{R}).
\end{equation}
From Proposition \ref{conv}, we obtain the following theorem. 

\begin{theorem} The market fibre bundle $\mathcal{B}$ has the
structure of a $G$-principal fibre bundle  given by the action:
\begin{equation}
\begin{split}
\mathcal{B}\times G &\longrightarrow\mathcal{B}\\
 ((D,P), \pi)&\mapsto (D,P)^{\pi}=(D^{\pi},P^{\pi})
\end{split}
\end{equation}
The group $G$ acts freely and differentiably on
$\mathcal{B}$ to the right.
\end{theorem}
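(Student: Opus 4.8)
The plan is to verify the three defining properties of a $G^*$-principal fibre bundle structure: that the map is a well-defined right action, that it is free, and that it is smooth (differentiable), with local triviality following from the explicit description of the fibres. First I would confirm that the action is well-defined, meaning that $(D,P)^\pi$ again lies in $\mathcal{B}$ whenever $(D,P)\in\mathcal{B}$ and $\pi\in G^*$. Since every element of $\mathcal{B}$ is by definition of the form $((D^\sigma)^x_t,(P^\sigma)^x_{t,\cdot})$ for some $(x,t)\in M$ and $\sigma\in G^*$, applying $\pi$ gives $((D^\sigma)^x_t,\dots)^\pi$, and by Proposition \ref{conv} this equals $((D^{\sigma\ast\pi})^x_t,\dots)$. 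The crucial point is that $\sigma\ast\pi\in G^*$: the convolution of two invertible cashflow intensities is again invertible, with inverse $\nu_\sigma\ast\nu_\pi$ where $\nu_\sigma,\nu_\pi$ are the respective convolution inverses. Hence the result stays in $\mathcal{B}$, and the action is well-defined.

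Next I would check the right-action axioms. The identity element of $G^*$ is $[0]$, the Dirac mass at $0$, which is the neutral element of the convolution semigroup; one verifies $(D,P)^{[0]}=(D,P)$ directly from Definition \ref{gaugeTransforms2}, since convolution with $[0]$ reproduces the term structure. Compatibility with the group operation, namely $((D,P)^\pi)^\nu=(D,P)^{\pi\ast\nu}$, is exactly the content of Proposition \ref{conv}. Because $G^*$ is Abelian the order is immaterial, so this is a genuine (right) action.

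For freeness I would show that if $(D,P)^\pi=(D,P)$ for some gauge in $\mathcal{B}$, then $\pi=[0]$. Writing out the transform formulae, equality of the deflator components forces $\int_0^{+\infty}dh\,\pi_h P_{t,t+h}=1$ for all $t$, while equality of the term-structure components forces $\int_0^{+\infty}dh\,\pi_h P_{t,s+h}=\int_0^{+\infty}dh\,\pi_h P_{t,t+h}\,P_{t,s}$ for all $s\ge t$. Interpreting these relations through the instantaneous forward rate representation $P_{t,s}=\exp(-\int_t^s f_{t,u}\,du)$ and letting $s$ and the maturities vary, the only compactly supported distribution $\pi$ satisfying them for a generic term structure is the Dirac mass at the origin; this uses that the family of exponential kernels $h\mapsto P_{t,t+h}$ separates enough test functions to pin down $\pi$ uniquely.

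Finally, smoothness of the action and local triviality follow from the fact that the group multiplication is convolution, which is a smooth bilinear operation on $\mathcal{E}'([0,+\infty[)$, together with the observation that over each point $(x,t)\in M$ the fibre is precisely the $G^*$-orbit $\{(D,P)^\pi\mid\pi\in G^*\}$; choosing the reference gauge $\sigma=[0]$ gives a global section and hence a global trivialization $\mathcal{B}\cong M\times G^*$. The main obstacle I anticipate is the freeness step: it is the only place requiring genuine analysis rather than formal manipulation, since one must argue that the integral identities above admit no nontrivial compactly supported distributional solution $\pi$, which rests on an injectivity (or separation) property of the forward-rate-weighted integral transform $\pi\mapsto\bigl(\int_0^{+\infty}dh\,\pi_h P_{t,t+h}\bigr)_{t}$ for a generic, non-degenerate term structure.
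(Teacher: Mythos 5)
Your verification of well-definedness and of the right-action axioms (identity $[0]=\delta_0$, compatibility via Proposition \ref{conv}) is correct, and it is exactly what the paper itself extracts from Proposition \ref{conv}; the remaining details are deferred by the paper to \cite{Fa14}. The genuine gap in your proposal is the freeness step, and it is not just the "main obstacle" you flag — as you have set it up, the argument cannot be completed. Freeness must hold at \emph{every} point of $\mathcal{B}$, not for a "generic, non-degenerate term structure", and the paper explicitly admits the classical model with flat term structure $P\equiv 1$ (see the remarks following Definition \ref{int}). For such a gauge both of your fixing conditions collapse to the single relation $\int_0^{+\infty}dh\,\pi_h=1$, which is satisfied by infinitely many compactly supported distributions; the separation/injectivity property of the kernels $h\mapsto P_{t,t+h}$ on which your strategy rests is simply false in this case.

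What makes the theorem true is the structure of $G^*$ itself, which is far smaller than your argument presumes, so that freeness has nothing to do with the term structure. If $\pi\ast\nu=\delta_0$ with $\pi,\nu\in\mathcal{E}^{\prime}([0,+\infty[)$, the theorem on supports (see \cite{Ho03}: the convex hull of $\supp(\pi\ast\nu)$ is the sum of the convex hulls of $\supp\pi$ and $\supp\nu$) forces $\supp\pi=\supp\nu=\{0\}$; distributions supported at the origin form, under Fourier transform, the algebra of polynomials, whose units are the nonzero constants. Hence $G^*=\{c\,\delta_0\mid c\neq 0\}$, and the transform formulae give $(D,P)^{c\delta_0}=(cD,P)$, since $P_{t,t}=1$. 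Freeness is then immediate: $cD=D$ forces $c=1$, provided only that the deflator does not vanish identically — a hypothesis needed in any case and independent of any genericity assumption. In particular, the troublesome mass-one distributions in the flat case are not elements of $G^*$ at all. Replacing your separation argument by this support-theorem characterization of $G^*$ repairs the proof; the rest of your proposal (action axioms, the global section given by $\sigma=[0]$, hence triviality of the bundle) goes through as written.
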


\subsubsection{Nelson $\mathcal{D}$ Weak Differentiable Market~Model} We continue to reformulate the classic asset model introduced in Section \ref{StochasticPrelude} in terms of stochastic differential geometry.
\begin{defi}\label{weakMM}
 A \textbf{Nelson $\mathcal{D}$ weak differentiable market model} for $N$ assets is described by $N$ gauges, which are Nelson $\mathcal{D}$ weak differentiable with respect to the time variable. More exactly, for~all $t\in[0,+\infty[$ and $s\ge t$, there is an open time interval $I\ni t$ such that for the deflators $D_t:=[D_t^1,\dots,D_t^N]^{\dagger}$ and the term structures $P_{t,s}:=[P_{t,s}^1,\dots,P_{t,s}^N]^{\dagger}$, the~latter seen as processes in $t$ and parameter $s$, there exists a $\mathcal{D}$ weak $t$-derivative (see Appendix \ref{Derivatives}). The~short rates are defined by $r_t:=\lim_{s\rightarrow t^{-}}\frac{\partial}{\partial s}\log P_{ts}$.\par
 A strategy is a curve $\gamma:I\rightarrow X$ in the portfolio space parameterized by the time. This means that the allocation at time $t$ is given by the vector of nominals $x_t:=\gamma(t)$. We denote as $\bar{\gamma}$ the lift of $\gamma$ to $M$, that is $\bar{\gamma}(t):=(\gamma(t),t)$. A~strategy is said to be \textbf{closed} if it represented by a closed curve.  A~\textbf{weak $\mathcal{D}$-admissible strategy} is predictable and $\mathcal{D}$- weak differentiable.
\end{defi}
\begin{remark}
We require weak $\mathcal{D}$-differentiability and not strong $\mathcal{D}$-differentiability because imposing a priori regularity properties on the trading strategies corresponds to restricting the class of admissible strategies with respect to the classical notion of Delbaen
and Schachermayer. Every (no-)arbitrage consideration depends crucially on the chosen definition
of admissibility. Therefore, restricting the class of admissible strategies may lead to
the automatic exclusion of potential arbitrage opportunities, leading to vacuous statements of FTAP-like results. An~admissibile strategy in the classic sense (see Section~\ref{section2}) is weak $\mathcal{D}$-differentiable.
\end{remark}

 In general the allocation can depend on the state of the nature, i.e.,~$x_t=x_t(\omega)$ for $\omega\in\Omega$.
\begin{proposition}
A weak $\mathcal{D}$-admissible strategy is self-financing if and only if:
\begin{equation}\label{sf}
\mathcal{D}(x_t\cdot D_t)=x_t\cdot \mathcal{D}D_t-\frac{1}{2}\mathfrak{D}_*\left<x,D\right>_t\text{ or }
\mathcal{D}x_t\cdot D_t=-\frac{1}{2}\mathfrak{D}_*\left<x,D\right>_t\text{ or }
\mathfrak{D}x_t\cdot D_t=0,
\end{equation}
almost surely. The~bracket $\left<\cdot,\cdot\right>$ denotes the continuous part of the quadratic covariation.
\end{proposition}

For the remainder of this paper, unless otherwise stated, we will only deal with weak $\mathcal{D}$ differentiable market models, weak $\mathcal{D}$ differentiable strategies, and, when necessary, with~weak $\mathcal{D}$ differentiable state price deflators. All It\^{o} processes are weak $\mathcal{D}$ differentiable, so that the class of considered admissible strategies is very~large.

\subsubsection{Arbitrage as~Curvature}
The Lie algebra of $G$ is the function space of all real valued functions on $[0, +\infty [$ denoted as:
\begin{equation}\mathfrak{g}=\mathbb{R}^{[0, +\infty[}\end{equation}
and therefore commutative.
Following Ilinski's idea~\cite{Il01}, we motivate the choice of a particular $\mathfrak{g}$-valued connection $1$-form by the fact that it allows us to encode portfolio rebalancing (or foreign exchange) and discounting as parallel transport.
\begin{theorem}\label{Ilinski}With the choice of connection
\begin{equation}\label{connection}\chi(x,t,g).(\delta x, \delta t):= \left(\frac{D_t^{\delta x}}{D_t^x}-r_t^x\delta t\right) g,\end{equation}
the stochastic parallel transport in $\mathcal{B}$ has the following financial~interpretations:
\begin{itemize}
\item Parallel transport along the nominal directions ($x$-lines) corresponds to a multiplication by an exchange rate;
\item Parallel transport along the time direction ($t$-line) corresponds to a division by a stochastic discount factor.
\end{itemize}
\end{theorem}
\begin{proof}
We refer to Theorem 28 in~\cite{Fa15}.
\end{proof}

Recall that the time derivatives needed to define the parallel transport along the time lines have to be understood in Stratonovich's sense. We see that the bundle is trivial, because~it has a global trivialization, but~the connection is not trivial.
 The connection $\chi$ writes
 as a linear combination of basis differential forms:
\begin{equation}
\chi(x,t,g)=\left(\frac{1}{D_t^x}\sum_{j=1}^ND_t^jdx_j-r_t^xdt\right)g.
\label{1-form}
\end{equation}

The ${g}$-valued curvature $2$-form is defined as follows:
\begin{equation}R:=d\chi+[\chi,\chi],\end{equation} which means that by this, for all $(x,t,g)\in \mathcal{B}$ and for all $\xi,\eta\in T_{(x,t)}M$,
\begin{equation}R(x,t,g)(\xi,\eta):=d\chi(x,t,g)(\xi,\eta)+[\chi(x,t,g)(\xi),\chi(x,t,g)(\eta)]. \end{equation}

Note that the Lie algebra being commutative, the~Lie bracket $[\cdot,\cdot]$ vanishes. After~some calculations, we obtain the following:
\begin{equation}\label{defCurv}R(x,t,g)=\frac{g}{D_t^x}\sum_{j=1}^ND_t^j\left(r_t^x+\mathcal{D}\log(D_t^x)-r_t^j-\mathcal{D}\log(D_t^j)\right)dx_j\wedge dt,\end{equation}
summarized as the following proposition. 
\begin{proposition}[\textbf{Curvature Formula}]\label{curvature}
Let $R$ be the curvature. Then, the~following quality holds:
\begin{equation}R(t,x,g)=g dt\wedge d_x\left[\mathcal{D} \log (D_t^x)+r_t^x\right].\end{equation}
\end{proposition}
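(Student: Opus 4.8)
The plan is to use the commutativity of the structure group to collapse the curvature to a plain exterior derivative, and then to compute that derivative in the base coordinates $(x,t)$ with the Nelson $\mathcal{D}$-derivative playing the role of $\partial/\partial t$. First I would note that since the Lie algebra $\mathcal{G}=\mathbb{R}^{[0,+\infty[}$ is commutative, the bracket term in $R=d\chi+[\chi,\chi]$ vanishes identically, so that $R=d\chi$. Writing $D_t^x=\sum_k x_k D_t^k$, I would record the connection form in components as $\chi=\sum_j A_j\,dx_j+A_0\,dt$ with $A_j=gD_t^j/D_t^x$ and $A_0=-g r_t^x$, the fibre coordinate $g$ being carried along as a multiplicative parameter because $R$ is evaluated on base tangent vectors $\xi,\eta\in T_{(x,t)}M$.

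The computation of $d\chi$ then splits into three pieces. The purely horizontal $x$-part vanishes because $\partial A_j/\partial x_i=-gD_t^iD_t^j/(D_t^x)^2$ is symmetric in $i,j$ and is therefore annihilated by the wedge $dx_i\wedge dx_j$. The mixed part has two contributions: the logarithmic time derivative $\mathcal{D}A_j=g\,(D_t^j/D_t^x)\,(\mathcal{D}\log D_t^j-\mathcal{D}\log D_t^x)$, and the base-space derivative of $A_0$. For the latter the key identity is $\partial r_t^x/\partial x_j=(D_t^j/D_t^x)(r_t^j-r_t^x)$, obtained by differentiating the quotient $r_t^x=(\sum_k x_k D_t^k r_t^k)/D_t^x$; this yields $\partial A_0/\partial x_j=(gD_t^j/D_t^x)(r_t^x-r_t^j)$.

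Collecting the coefficient of $dx_j\wedge dt$ gives
\[
-\mathcal{D}A_j+\frac{\partial A_0}{\partial x_j}=\frac{gD_t^j}{D_t^x}\left(r_t^x+\mathcal{D}\log D_t^x-r_t^j-\mathcal{D}\log D_t^j\right),
\]
which is precisely the intermediate formula stated just before the Proposition. To recast this in the compact form $g\,dt\wedge d_x[\mathcal{D}\log D_t^x+r_t^x]$ I would commute $\mathcal{D}$ with $\partial/\partial x_j$ and use $\partial_{x_j}\log D_t^x=D_t^j/D_t^x$ to check that $\partial_{x_j}[\mathcal{D}\log D_t^x+r_t^x]=(D_t^j/D_t^x)(\mathcal{D}\log D_t^j-\mathcal{D}\log D_t^x+r_t^j-r_t^x)$, whose product with $-g$ matches the coefficient above once the sign from $dt\wedge dx_j=-dx_j\wedge dt$ is taken into account.

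The main obstacle is that the time derivative here is not the ordinary $\partial/\partial t$ but the Nelson $\mathcal{D}$-derivative associated with Stratonovich integration, so the product rule, the chain rule for $\log$, and the commutation of $\mathcal{D}$ with the deterministic $x$-differentiation all have to be justified in the stochastic setting rather than silently assumed. Because $\mathcal{D}$ obeys the usual Leibniz and chain rules under Stratonovich calculus, each formal manipulation transfers verbatim from the deterministic case; the real content therefore lies in invoking this correspondence cleanly, and in verifying that the quotient and logarithmic derivatives of the semimartingales $D_t^j$ are $\mathcal{D}$-differentiable wherever the computation requires it.
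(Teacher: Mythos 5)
Your proposal is correct and follows essentially the same route as the paper: the commutativity of $\mathcal{G}$ kills the bracket so that $R=d\chi$, the exterior derivative is computed in base coordinates with $\mathcal{D}$ as the time derivative (the symmetric $dx_i\wedge dx_j$ part cancelling), yielding exactly the paper's intermediate formula $R(x,t,g)=\frac{g}{D_t^x}\sum_{j}D_t^j\left(r_t^x+\mathcal{D}\log(D_t^x)-r_t^j-\mathcal{D}\log(D_t^j)\right)dx_j\wedge dt$, which is then repackaged as $g\,dt\wedge d_x\left[\mathcal{D}\log(D_t^x)+r_t^x\right]$ via the identities $\partial_{x_j}\log D_t^x=D_t^j/D_t^x$ and $\partial_{x_j}r_t^x=(D_t^j/D_t^x)(r_t^j-r_t^x)$. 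The paper leaves these calculations to its reference \cite{Fa14}, so your write-up simply supplies the details it omits, including the correct caveat that the Leibniz and chain rules for $\mathcal{D}$ hold because it corresponds to Stratonovich calculus.
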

The following result  characterizes arbitrage as curvature.
\begin{theorem}[\textbf{No-Arbitrage}]\label{holonomy}
The following assertions are~equivalent:
\begin{itemize}
\item [(i)] The market model (with base assets and futures with discounted prices $D$  and $P$) satisfies the no-free-lunch-with-vanishing-risk condition;
\item[(ii)] There exists a positive martingale $\beta=(\beta_t)_{t\ge0}$ such that deflators and short rates satisfy, for all portfolio nominals and all times, the condition
\begin{equation}r_t^x=-\mathcal{D}\log(\beta_tD_t^x);\end{equation}
\item[(iii)] There exists a positive martingale $\beta=(\beta_t)_{t\ge0}$ such that deflators and term structures satisfy, for all portfolio nominals and all times, the condition
\begin{equation}P^x_{t,s}=\frac{\mathbb{E}_t[\beta_sD^x_s]}{\beta_tD^x_t}.\end{equation}
\end{itemize}
\end{theorem}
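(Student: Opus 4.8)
The plan is to derive the theorem from the two asset-pricing results already stated and to transfer the resulting martingale condition into the infinitesimal form (ii) and the integral form (iii) by means of Nelson's stochastic derivative. The backbone is the reduction of (i) to the existence of a state price deflator: by Theorem \ref{ThmDeSch} the (NFLVR) condition is equivalent to the existence of an equivalent martingale measure for $\hat S$, and by Theorem \ref{ThmZ} this is in turn equivalent to the existence of a positive semimartingale $\beta=(\beta_t)_{t\ge 0}$ that is a pricing kernel, i.e. such that $\beta_t\hat S_t=\beta_t D_t$ is a $\mathbb P$-martingale. Because a pricing kernel acts on the whole price vector at once, $\beta_t D^x_t=\sum_j x_j\beta_t D^j_t$ inherits the correct behaviour simultaneously for every \emph{static} portfolio $x$ by linearity; this is what lets a \emph{single} $\beta$ serve the quantifier ``for all portfolio nominals $x$''. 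I would then close the cycle (i)$\Rightarrow$(iii)$\Rightarrow$(ii)$\Rightarrow$(i).

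For (i)$\Rightarrow$(iii) I would use that the object encoded by $P^x_{t,s}$ is a forward-dated claim delivering one unit of the instrument $x$ at maturity $s$; under (NFLVR) its arbitrage-free value, expressed in units of the deflator at time $t$, is the pricing-kernel expectation of its payoff, namely
\begin{equation}
P^x_{t,s}=\frac{\mathbb{E}_t[\beta_s D^x_s]}{\beta_t D^x_t}.
\end{equation}
The normalization $P^x_{t,t}=1$ of Definition \ref{defi1} is automatically consistent with this formula, since the right-hand side reduces to $1$ at $s=t$.

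For the remaining arcs, set $Z^x_t:=\beta_t D^x_t$, so that (iii) is equivalent to $-\log P^x_{t,s}=\log Z^x_t-\log\mathbb{E}_t[Z^x_s]$. Using $f^x_{t,s}=-\partial_s\log P^x_{t,s}$ and $r^x_t=\lim_{s\to t^+}f^x_{t,s}$ from Definition \ref{int}, differentiating in the maturity $s$ and letting $s\to t^+$ gives $r^x_t=-\lim_{s\to t^+}\partial_s\log\mathbb{E}_t[Z^x_s]$; since $\partial_s\mathbb{E}_t[Z^x_s]\big|_{s=t^+}=\mathcal{D}Z^x_t$ is exactly Nelson's mean forward derivative (Appendix A), this yields $r^x_t=-\mathcal{D}Z^x_t/Z^x_t=-\mathcal{D}\log(\beta_t D^x_t)$, which is (ii). To return (ii)$\Rightarrow$(i) I would read (ii) as $\mathcal{D}\log\!\big(\beta_t D^x_t\exp(\int_0^t r^x_u\,du)\big)=0$, i.e. the total-return process $\beta_t D^x_t\exp(\int_0^t r^x_u\,du)$ has vanishing mean-forward drift and is therefore a $\mathbb P$-martingale, exhibiting $\beta$ as a pricing kernel; Theorems \ref{ThmZ} and \ref{ThmDeSch} then give (NFLVR). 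I would also record the geometric reading: rewriting (ii) as $\mathcal{D}\log(D^x_t)+r^x_t=-\mathcal{D}\log\beta_t$ shows the left-hand side is independent of $x$, so $d_x[\mathcal{D}\log(D^x_t)+r^x_t]=0$ and the curvature of Proposition \ref{curvature} vanishes --- the intrinsic statement that (NFLVR) is a flatness condition.

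The main obstacle I expect is analytic rather than structural. First, one must justify interchanging the maturity derivative $\partial_s$, the conditional expectation $\mathbb{E}_t$, and Nelson's operator $\mathcal{D}$ at the diagonal $s=t^+$ for genuine semimartingale deflators, and identify $\partial_s\mathbb{E}_t[Z^x_s]|_{s=t^+}$ with $\mathcal{D}Z^x_t$ rigorously. Second, and more delicately, the step ``vanishing log-drift $\Rightarrow$ martingale'' requires the It\^o--Stratonovich correction: the quadratic-covariation terms $\tfrac12\langle\cdot,\cdot\rangle$ already visible in the self-financing identity (\ref{sf}) must be tracked so that the Stratonovich (Nelson $\mathcal{D}$) drift condition is correctly matched to the true martingale property, together with the integrability needed to pass from a local to a genuine martingale. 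Finally, promoting the pointwise-in-$x$ relations to one positive semimartingale $\beta$ valid uniformly over the open portfolio set $X$ and all maturities is exactly what the flatness (zero holonomy) of the connection guarantees, and I would invoke Proposition \ref{curvature} to close that gap.
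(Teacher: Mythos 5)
Your skeleton --- reduce (i) to the existence of a pricing kernel via Theorems \ref{ThmDeSch} and \ref{ThmZ}, read (iii) as the statement that this kernel prices the synthetic zero bonds, and obtain (ii) by differentiating in the maturity at $s=t^+$ --- is the route the paper intends (the paper itself omits the proof, deferring to \cite{Fa14}, which argues this way). But there is a genuine gap at the calculus step, and it is the crux, not a deferrable technicality. By Definition \ref{Nelson}, the limit $\partial_s\mathbb{E}_t[Z^x_s]\big|_{s=t^+}=\lim_{h\to0^+}\mathbb{E}_t\bigl[(Z^x_{t+h}-Z^x_t)/h\bigr]$ is Nelson's \emph{forward} derivative $DZ^x_t$, not the mean derivative $\mathcal{D}Z^x_t=\tfrac12(D+D_*)Z^x_t$; your phrase ``mean forward derivative'' conflates two distinct operators, which differ by exactly the It\^o--Stratonovich correction that the theorem's $\mathcal{D}\log$ formulation is designed to carry. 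The discrepancy is not negligible: a martingale $Z$ satisfies $DZ_t=0$ but in general $\mathcal{D}Z_t=\tfrac12 D_*Z_t\neq0$ (for Brownian motion $DW_t=0$ while $\mathcal{D}W_t=W_t/(2t)$), so the equivalence ``vanishing $\mathcal{D}$-drift $\Leftrightarrow$ martingale,'' which you use in both (iii)$\Rightarrow$(ii) and (ii)$\Rightarrow$(i), is false as stated. What your argument actually proves is the theorem with $D$ in place of $\mathcal{D}$; bridging the two, with the quadratic-covariation terms tracked, is the real content, and you list it as an ``obstacle'' rather than resolve it.

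Two further problems. Your closing move --- that promoting the pointwise-in-$x$ relations to a single positive semimartingale $\beta$ ``is exactly what the flatness of the connection guarantees,'' invoking Proposition \ref{curvature} --- is backwards and circular. Zero curvature is the paper's \emph{weakest} no-arbitrage notion (its own corollary is (NFLVR) $\Rightarrow$ (NA) $\Rightarrow$ (ZC), with no converse): flatness yields only some process $\lambda_t$ with $\mathcal{D}\log D^x_t+r^x_t=\lambda_t$ independent of $x$, and the entire gap between (ZC) and (NFLVR) is whether $\lambda_t=-\mathcal{D}\log\beta_t$ for a positive \emph{semimartingale} $\beta$; the uniformity in $x$ must come, as your first paragraph already has it, from the single kernel of Theorem \ref{ThmZ}, so this appeal should be deleted. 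Moreover, linearity does serve you for (ii) --- both sides are weighted averages with weights $x_jD^j_t/D^x_t$ summing to one --- but \emph{not} for (iii): by the paper's definitions $P^x_{t,s}=\exp\bigl(-\int_t^s f^x_{t,u}\,du\bigr)$ is a \emph{geometric} weighted mean of the $P^j_{t,s}$, whereas the right-hand side of (iii), expanded by linearity of $D^x$ and the per-asset identities, is the \emph{arithmetic} weighted mean $\sum_j (x_jD^j_t/D^x_t)\,P^j_{t,s}$; these coincide only in degenerate cases, so ``by linearity'' does not extend (iii) from the basis assets to all nominals $x$. Finally, Theorems \ref{ThmDeSch} and \ref{ThmZ} as stated apply to the price process $\hat S$ alone; to have the same kernel price the zero-bond claims you must either enlarge the market to which the FTAP is applied or argue those values by replication --- a step your sketch takes for granted.
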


 This motivates the following definition.
\begin{defi}
The market model satisfies the \textbf{zero curvature (ZC)} if and only if
the curvature vanishes a.s.
\end{defi}

\noindent Therefore, we have following implications relying on two different definitions of no-arbitrage:
\begin{corollary}
\begin{equation}
{(NFLVR)}\Rightarrow {(ZC)}.
\end{equation}
\end{corollary}

 As proved in~\cite{FaTa20}, the~two weaker notions of arbitrage---the~zero curvature and the no-unbounded-profit-with-bounded-risk---are satisfied.
\begin{theorem}\label{thm_ZC_equiv}
\begin{equation}
{(NUPBR)}\Rightarrow{(ZC)}.
\end{equation}
\end{theorem}

 The converse is true for special cases of It\^{o}'s dynamics for asset values and term structures (see~\cite{FaTa20}).
\begin{remark}\label{rem_holonomy}
Let us consider some special cases of Theorem \ref{holonomy}.
\begin{enumerate}
\item \textbf{The components of $r$ are equal: } For example, in the classical model, where there are no term structures (i.e., $r\equiv0$),
\begin{enumerate}
\item \textbf{$D$ and $r$ are constant over time: } NFLVR is satisfied;
\item \textbf{$D$ and $r$ are deterministic and not constant over time:} NFLVR is never satisfied.
\end{enumerate}
\item \textbf{The components of $r$ are not equal: }
\begin{enumerate}
\item \textbf{$D$ and $r$ are constant over time: }NFLVR is never satisfied;
\item \textbf{$D$ and $r$ are deterministic and not constant over time:} NFLVR can be satisfied if (ii) or (iii) hold true.
\end{enumerate}
\end{enumerate}
\end{remark}
\subsubsection{Expected Utility Maximization and the CAPM~Formula}
\begin{defi}[\textbf{EUM}]
The \textbf{expected utility maximization}  problem for the final wealth over the period $[0,t]$ for a given utility function $u$ reads as follows:
\begin{equation}\label{eq}
\boxed{
\max_{\substack{(x_u)_{u\in[0,t]}\\(x_u)_u\text{ is self-financing and admissible}\\x_0\cdot D_0=1}}\mathbb{E}_0\left[u(x_t\cdot D_t)\right].
}
\end{equation}
We denote as (EUM) the existence of a solution and its uniqueness for (\ref{eq}).
\end{defi}

 As proved in~\cite{FaTa20}, e.g., the~existence of a solution for an expected utility maximization problem and the no-unbounded-profit-with-bounded-risk is satisfied.
\begin{theorem}
\begin{equation}
{(NUPBR)}\Rightarrow{(EUM)}
\end{equation}
\end{theorem}

 Asset returns and the market portfolio return are related at the expected value level by means of sensitivities. The~relationship is the following equilibrium result.
\begin{theorem}[\textbf{CAPM}]
Let
\begin{equation}
R_{[0,t]}:=\frac{D_t}{D_0}-1\qquad\qquad R_{[0,t]}^M:=\frac{D_t^{x_t^M}}{D_0^{x_0^M}}-1
\end{equation}
be the discounted assets total returns and the discounted market portfolio total return. If~we assume that the expected utility of the final wealth of every portfolio is maximized, then
\begin{equation}
\boxed{
\mathbb{E}_0[R_{[0,t]}]=\frac{ \Cov_0\left(R_{[0,t]},R_{[0,t]}^M\right)}{ \vari_0\left(R_{[0,t]}^M\right)}\mathbb{E}_0[R_{[0,t]}^M].
}
\end{equation}
\end{theorem}

\begin{proof}
\noindent If we choose a quadratic utility function
\begin{equation}
u(v):=v-\frac{\lambda}{2}v^2,
\end{equation}
\noindent where $\lambda$ denotes the risk aversion parameter, and~we assume that only at times $\{0,t\}$ rebalancing during the interval $[0,t]$ is allowed,
then the expected utility maximization problem becomes
\begin{equation}
\max_{\substack{x_0\\x_0\cdot D_0=1}}\mathbb{E}_0\left[(x_t\cdot D_t)-\frac{\lambda}{2}(x_t\cdot D_t)^2\right],
\end{equation}
\noindent which is equivalent to
\begin{equation}
\max_{\substack{x_0\\x_0\cdot D_0=1}}\mathbb{E}_0\left[(x_t\cdot D_t)-\frac{\lambda}{2}\left((x_t\cdot D_t)^2-\mathbb{E}_0[x_t\cdot D_t]\right)\right],
\end{equation}
\noindent and to
\begin{equation}
\max_{\substack{w_0\\w_0\cdot e=1}}w_0^{\dagger}\mathbb{E}_0[R_{[0,t]}]-\frac{\lambda}{2}w_0^{\dagger}\VCM_0(R_{[0,t]})w_0,
\end{equation}
\noindent where $w_0:=\frac{x_0D_0}{x_0\cdot D_0}$ are the portfolio weights. The~solution is the market portfolio
\begin{equation}\label{sol_MV}
w_0^M=\frac{{\hat{w}}_0^M}{\sqrt{{(\hat{w}}^M_0)^{\dagger}{\hat{w}}_0^M}},
\text{ where }
\quad\hat{w}_0^M:=\frac{1}{\lambda}\VCM_0(R_{[0,t]})^{-1}\mathbb{E}_0[R_{[0,t]}].
\end{equation}

 Therefore, we have by (\ref{sol_MV})
\begin{equation}
\begin{split}
\mathbb{E}_0[R_{[0,t]}]& =\lambda \VCM_0(R_{[0,t]})\hat{w}_0^M=\frac{\lambda}{|\hat{w}_0^M|}\VCM_0(R_{[0,t]})w_0^M=\\
&=\frac{\lambda}{|\hat{w}_0^M|}\Cov_0\left(R_{[0,t]}, R_{[0,t]}^M\right),
\end{split}
\end{equation}
\noindent and by multiplication with $w_0^M$
\begin{equation}\label{exp_rets}
\mathbb{E}_0[R_{[0,t]}^M]=\frac{\lambda}{|\hat{w}_0^M|^2}(w_0^M)^{\dagger}\VCM_0(R_{[0,t]})w_0^M=\frac{\lambda}{|\hat{w}_0^M|^2}\vari_0\left(R_{[0,t]}^M\right),
\end{equation}
\noindent from which we infer that
\begin{equation}
\frac{\lambda}{|\hat{w}_0^M|^2}=\frac{\mathbb{E}_0[R_{[0,t]}^M]}{\vari_0\left(R_{[0,t]}^M\right)},
\end{equation}
\noindent which, inserted into (\ref{exp_rets}), leads to
\begin{equation}
\mathbb{E}_0[R_{[0,t]}]=\frac{\Cov_0\left(R_{[0,t]}, R_{[0,t]}^M\right)}{\vari_0\left(R_{[0,t]}^M\right)}\mathbb{E}_0[R_{[0,t]}],
\end{equation}
\noindent and the proof is completed.
\end{proof}

\begin{remark}
The vector
\begin{equation}
\beta_0:=\frac{ \Cov_0\left(R_{[0,t]}, R_{[0,t]}^M\right)}{\vari_0\left(R_{[0,t]}^M\right)}
\end{equation}
contains the sensitivities of the expected asset returns with respect to the expected market portfolio return. We can compute the CAPM in its classical form for the returns of the asset as follows:
\begin{equation}
r_{[0,t]}:=\frac{S_t}{S_0}-1=\left(1+R_{[0,t]}\right)\exp\left(+\int_0^tdu\,r^0_u\right)-1
\end{equation}
obtaining
\begin{equation}
\boxed{
\mathbb{E}_0[r_{[0,t]}]-r_{[0,t]}^f=\frac{\Cov_0\left(r_{[0,t]}, r_{[0,t]}^M\right)}{\vari_0\left(r_{[0,t]}^M\right)}\left(\mathbb{E}_0[r_{[0,t]}]-r_{[0,t]}^f\right),
}
\end{equation}
\noindent where $r_{[0,t]}^f:=\exp\left(+\int_0^tdu\,r^0_u\right)-1$ is the risk-free return.
\end{remark}
\begin{remark}
The different arbitrage concepts and the Capital Asset Pricing Model are related in the following logical representation:
\begin{equation}\small
\boxed{
\text{(EMM)} \Leftrightarrow\text{(E)} \Leftrightarrow \left\{
                                \begin{array}{ll}
\text{(NFLVR)}\Leftrightarrow \left\{
                                \begin{array}{ll}
                                  \text{(NUPBR)}\Rightarrow \left\{\begin{array}{l}\text{(EUM)}\\\text{(ZC)}\end{array}\right\}\Rightarrow\text{(CAPM)}\\
                                  \text{(NA)}
                                \end{array}
                              \right.\\
\text{(ND)}
   \end{array}
                              \right.
}
\end{equation}
\end{remark}

\section{Spectral~Theory}\label{section3}
\unskip
\subsection{Cash Flows as Sections of the Associated Vector~Bundle} \label{cfbundle}
\begin{defi}[\textbf{Cash Flow Bundle}]
By choosing the fiber $V:=\mathbb{R}^{[0, +\infty[}$
and the representation $\rho:G\rightarrow \text{GL}(V)$ induced by
the gauge transform definition, and~therefore satisfying the
homomorphism relation $\rho(g_1\ast g_2)=\rho(g_1)\rho(g_2)$, we
obtain the associated vector bundle $\mathcal{V}$. Its sections
represent cash flow streams---expressed in terms of the deflators---generated by portfolios of the base assets. If~$v=(v^x_t)_{(t,x)\in M}$ is the \textit{deterministic} cash flow
stream, then its value at time $t$ is equal~to:
\begin{itemize}
\item the deterministic quantity $v_t^x$, if~the value is measured in terms of
the deflator $D_t^x$;
\item the stochastic quantity $v^x_tD^x_t$, if~the value is measured in terms of the
num\'{e}raire (e.g., the cash account for the choice
$D_t^j:=\hat{S}_t^j$ for all $j=1,\dots,N$).
\end{itemize}

 The bundle $\mathcal{V}$ over the time-nominals-space $M=[0,T]\times\mathfrak{X}$
is called the \textbf{Cash Flow Bundle}.
\end{defi}
\noindent In the general theory of
principal fibre bundles, gauge transforms are bundle automorphisms
preserving the group action and equal to the identity on the base
space. Gauge transforms of $\mathcal{B}$ are naturally isomorphic to
the sections of the bundle $\mathcal{B}$ (See Theorem 3.2.2 in~\cite{Bl81}). Since $G$ is Abelian, right multiplications are
gauge transforms. Hence, there is a bijective correspondence between
gauge transforms and cash flow intensities admitting an inverse. This
justifies the terminology introduced in Definition
\ref{gaugeTransforms2}.
\subsection{The Connection Laplacian associated with the Market~Model}
The connection $\chi$ on the market principal fibre bundle $\mathcal{B}$ defined in Theorem \ref{Ilinski} induces a covariant differentiation $\nabla^{\mathcal{V}}$ on the associated vector bundle $\mathcal{V}$, with~the same interpretation for the corresponding parallel transport as that in Theorem \ref{Ilinski} for the principal fibre bundle, i.e.,~portfolio rebalancing along the asset nominal dimensions and discounting along the time dimension. More exactly, we have the following proposition.
\begin{proposition}\label{nabla}
Let us extend the coordinate vector $x\in\mathbb{R}^N$ with a $0$th component given by the time $t$.
Let $X=\sum_{j=0}^NX_j\frac{\partial}{\partial x_j}$ be a vector field over $M$ and $f=(f_s)_s$ a section of the cash flow bundle $\mathcal{V}$. Then
\begin{equation}
\nabla^{\mathcal{V}}_Xf_t=\sum_{j=0}^N\left(\frac{\partial f_t}{\partial x_j}+K_jf_t\right)X_j,
\end{equation}
where
\begin{equation}
\begin{split}
K_0(x)&=-r_t^x\\
K_j(x)&=\frac{D^j_t}{D_t^x}\quad(1\le j\le N).
\end{split}
\end{equation}
\end{proposition}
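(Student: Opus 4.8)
The plan is to deduce the formula from the standard description of the covariant derivative on a vector bundle associated to a principal bundle carrying a connection. Recall that if $\mathcal{V} = \mathcal{B}\times_\rho V$ is associated to the principal $G^*$-bundle $\mathcal{B}$ through the representation $\rho:G\to \mathrm{GL}(V)$, then in a local trivialization every section $f$ is represented by a $V$-valued function and its covariant derivative reads $\nabla_X f = X(f) + \rho_*(\chi(X))\,f$, where $\chi$ is the connection $1$-form and $\rho_*:\mathcal{G}\to\mathfrak{gl}(V)$ is the differential of $\rho$ at the identity. Since $\mathcal{B}$ is trivial (it admits the global trivialization $\mathcal{B}\cong M\times G^*$ recorded just after the connection theorem), I would work throughout in the global section $g=[0]$, the convolution unit, so that no transition functions intervene and the displayed formula holds globally on $M$.

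First I would make $\rho$ and its differential explicit. Because the structure group is the convolution group of cashflow intensities, $G$ acts on the fibre $V=\mathbb{R}^{[0,+\infty[}$ by convolution, $\rho(\pi)v=\pi\ast v$; this is exactly the action induced by the gauge-transform formulae of Definition \ref{gaugeTransforms2}, and it satisfies the homomorphism relation $\rho(\pi\ast\nu)=\rho(\pi)\rho(\nu)$ stated in the text. Differentiating a curve $\epsilon\mapsto[0]+\epsilon\xi$ through the identity gives $\rho_*(\xi)v=\xi\ast v$. The crucial observation is that, evaluated at the global section $g=[0]=\delta_0$, the connection form takes values proportional to the convolution unit: from $\chi(x,t,g)=\left(\frac{1}{D_t^x}\sum_{j=1}^N D_t^j\,dx_j - r_t^x\,dt\right)g$ one gets $\chi(X)=\left(\sum_{j=1}^N \frac{D_t^j}{D_t^x}X_j - r_t^x X_0\right)\delta_0$, writing $x_0:=t$. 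Since convolution by $\lambda\delta_0$ is multiplication by $\lambda$, the operator $\rho_*(\chi(X))$ collapses to scalar multiplication by the bracketed coefficient.

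It then remains to read off the coefficients: expanding $X(f)=\sum_{j=0}^N \frac{\partial f_t}{\partial x_j}X_j$ and combining with the scalar term $\rho_*(\chi(X))f_t=\left(\sum_{j=0}^N K_j X_j\right)f_t$ — with $K_0=-r_t^x$ from the $dt=dx_0$ slot and $K_j=D_t^j/D_t^x$ from the $dx_j$ slots — reproduces the asserted expression. The step I expect to be the genuine obstacle is the second one: justifying that $\rho_*$ applied to the connection value reduces to honest scalar multiplication. This rests on two points that must be argued carefully in the distributional and stochastic setting, namely that the induced representation really is convolution (so that its differential at the identity is again convolution) and that the connection value is supported at $0$ in the chosen trivialization, i.e. proportional to $\delta_0$; once these are in place the reduction to multiplication is immediate, because $\delta_0$ is the unit for convolution. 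The remaining ingredients — the Stratonovich reading of the $t$-derivative and the bookkeeping of the $N+1$ coordinates with $x_0=t$ — are routine.
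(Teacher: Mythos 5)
Your proposal is correct and takes essentially the same route as the paper's own proof: both use the standard associated-bundle formula $\nabla_X f = X(f) + \rho_*\bigl(\chi(X)\bigr)f$, identify the induced representation as convolution (so that its differential at the identity is again convolution), and exploit the fact that at the identity section the connection value is $K(x)\delta$, whence convolution by it collapses to scalar multiplication by the coefficient $K.X$, giving $K_0=-r_t^x$ and $K_j=D_t^j/D_t^x$. The paper merely packages the final bookkeeping by expanding the section over the family $v_s=\delta_{s-t}$ and computing under the integral, which is a cosmetic rather than substantive difference from your presentation.
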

\begin{proof} The construction of a covariant differentiation on the associated vector bundle starting from a connection on a principal fibre bundle is a generic procedure in differential geometry. The~connection $\chi$ is a Lie algebra $\mathfrak{g}=\mathbb{R}^{[0,+\infty[}$ valued $1$-form on $\mathcal{B}$, and we can decompose the connection as $\chi(x,g)=gK(x)$, where $K(x):=\sum_{j=0}^NK_j(x)dx_j$. The~differential map $T_e\rho:\mathfrak{g}\rightarrow\mathcal{L}(\mathbb{R}^{[0,+\infty[})$ of the representation $\rho:G\rightarrow\text{GL}(\mathbb{R}^{[0,+\infty[})$ maps elements of the Lie algebra on endomorphisms for the bundle $\mathcal{V}$.
Given a local cash flow section $f_t=\int_0^{+\infty}ds\,f_s\delta_{s-t}$, in~$\mathcal{V}|_U$ and a local vector field $X$ in $TM|_U$, the connection $\nabla^{\mathcal{V}}$ has the following local representation:
\begin{equation}
\nabla^{\mathcal{V}}_Xf_t=\int_0^{+\infty}ds(df_s(X).v_s+f_s\omega(X).v_s),
\end{equation}
where $v_s:=\delta_{s-t}$ and $\omega$ is an element of $T^*U|_U\bigotimes \mathcal{L}(V|_U)$, i.e.,~an endomorphism valued $1-$form defined as follows:
\begin{equation}
\omega(x)(X):=(T_e\rho.\chi(x,e)).X=\left.\frac{d}{d\varepsilon}\right|_{\varepsilon=0}\rho\left(\exp(\varepsilon \chi(x,e).X)e\right).
\end{equation}

Since the derivative of the exponential map is the identity and
\begin{equation}
\rho(\pi)=\pi*\cdot\in\text{GL}(\mathcal{V}_x)\Rightarrow T_e\rho.t=t*\cdot\in\mathcal{L}(\mathcal{V}_x),
\end{equation}
\noindent it follows that
\begin{equation}
\omega(x)= \chi(x,e)*\cdot=K(x)\otimes\delta*\cdot,
\end{equation}
and therefore,
\begin{equation}
\begin{split}
\nabla^{\mathcal{V}}_Xf_t&=\int_0^{+\infty}ds\left[df_s(X)v_s+f_sK.X\delta*v_s\right]\\
&=\int_0^{+\infty}ds\left[df_s(X)+f_sK.X\right]v_s\\
&=df_t(X)+f_tK.X\\
&=\sum_{j=0}^N\left(\frac{\partial f_t}{\partial x_j}+K_jf_t\right)X_j.
\end{split}
\end{equation}
\end{proof}

\begin{proposition}
The curvature of the connection $\nabla^{\mathcal{V}}$ is
\begin{equation}\label{curvature_equiv}
R^{\mathcal{V}}(X,Y):=\nabla^{\mathcal{V}}_X\nabla^{\mathcal{V}}_Y-\nabla^{\mathcal{V}}_Y\nabla^{\mathcal{V}}_X-\nabla^{\mathcal{V}}_{[X,Y]}=[p]\circ(R(X^*,Y^*,e)\,*\, \cdot)\circ [p]^{-1},
\end{equation}
where $R$ is the curvature on the principal fibre bundle $\mathcal{B}$, $X^*,Y^*\in T_p\mathcal{B}$ the horizontal lifts of $X,Y\in T_{(t,x)}M$, and
\begin{equation}
\begin{split}
[p]:V=\mathbb{R}^{[0, +\infty[}&\longrightarrow\mathcal{V}_{(t,x)}:=\mathcal{B}_{(t,x)}\times_{G}V\\
v&\longmapsto [p](v)=[p,v]
\end{split}
\end{equation}
\noindent is the fibre isomorphism between $\mathcal{B}$ and $\mathcal{V}$. In~particular, the curvature on the principal fibre bundle vanishes if and only if the curvature on the associated vector bundle vanishes.
\end{proposition}
\begin{proof}
Equation~(\ref{curvature_equiv}) follows from the definition of the curvature on a vector bundle and utilizes  Satz 3.21 in~\cite{Ba14} with $\rho(\pi)=\pi*\,\cdot\,$.
\end{proof}

 We now continue by introducing the connection Laplacian on an appropriate Hilbert space.
\begin{defi} The space of the sections of the cash flow bundle can be turned into a scalar product space by introducing,
for stochastic sections $f=f(t, x, \omega)=(f_s(t, x, \omega))_{s\in[0,+\infty[}$ and $g=g(t, x, \omega)=(g_s(t, x, \omega))_{s\in[0,+\infty[}$, the following:
\begin{align}
(f,g):=\int_{\Omega}dP\int_Xd^Nx\int_0^{+\infty}dt\left<f,g\right>(t, x, \omega) &=\mathbb{E}_0\left[(f,g)_{L^2(M,\mathbb{R}^{[0,+\infty[})}\right] \nonumber \\
&=(f,g)_{L^2(\Omega,\mathcal{V},\mathcal{A}_0,dP)},
\end{align}
where
\begin{equation*}
\left<f,g\right>(x, t, \omega):=\int_0^{+\infty}dsf_s(t, x, \omega)g_s(t,x, \omega).
\end{equation*}

The Hilbert space of integrable sections reads as follows:
\begin{align}
\mathcal{H}&:=L^2(\Omega,\mathcal{V},\mathcal{A}_0,dP) \nonumber \\
&=\left\{\left.f=f(t,x,\omega)=(f_s(t,x,\omega))_{s\in[0,+\infty[}\right|\,(f,f)_{L^2(\Omega,\mathcal{V},\mathcal{A}_0,dP)}<+\infty\right\}.
\end{align}
\end{defi}

 When considering the connection Laplacian, there are two standard choices for a local elliptic boundary condition which guarantees~self-adjointness:
\begin{itemize}
\item \textbf{Dirichlet boundary condition:} 

\begin{equation}
B_D(f):=f|_{\partial M}.
\end{equation}
\item \textbf{Neumann boundary condition:}
\begin{equation}
B_N(f):=(\nabla^{\mathcal{V}}_{\nu}f)|_{\partial M},
\end{equation}
\noindent where $\nu$ denotes the normal unit vector field to $\partial M$.
\end{itemize}

By considering the $\omega$ a parameter dependence, we can apply a standard result functional analysis to obtain the following proposition.
\begin{proposition}\label{spec}
The connection Laplacian $\Delta^{\mathcal{V}}:={\nabla^{\mathcal{V}}}^*\nabla^{\mathcal{V}}$ with a domain definition given by the Neumann boundary condition,
\begin{equation}
\text{dom}\left(\Delta^{\mathcal{V}}_{B_N}\right):=\left\{f\in\left.\mathcal{H}\right|\,f(\omega,\cdot,\cdot)\in H^2(M,\mathbb{R}^{[0,+\infty[}),\,B_N(f(\omega,\cdot,\cdot))=0\;\forall\, \omega\in\Omega\right\},
\end{equation}
is a self-adjoint operator on $\mathcal{H}$. Its spectrum consists in the disjoint union of the discrete spectrum (eigenvalues) and the continuous spectrum (approximate eigenvalues) lying in $[0,+\infty[$:
\begin{equation}
\text{spec}(\Delta^{{\mathcal V}}_{B_N})=\text{spec}_d(\Delta^{{\mathcal V}}_{B_N})\;\dot{\cup} \text{spec}_c(\Delta^{{\mathcal V}}_{B_N}).
\end{equation}

 If $M$ is compact, for~example, by setting $M:=[0,T]\times\mathfrak{X}$, $\mathfrak{X}\subset\mathbb{R}^N$ compact and $T<+\infty$, then the continuous spectrum is empty and the eigenvalues can be ordered in a monotone increasing sequence converging to $+\infty$.
\end{proposition}
\begin{remark}
When we choose $M=[0,T]\times\mathfrak{X}$ with $T<+\infty$, we have to adapt the construction of the principal fibre bundle and the associated vector bundle accordingly. Note that the structure group of $\mathcal{B}$ and its Lie Algebra remain $G$ and $\mathbb{R}^{[0, +\infty[}$, respectively, and~the fibre of $\mathcal{V}$ is still $\mathbb{R}^{[0, +\infty[}$. Only the integration over the time dimension in the base space $M$ is performed until $T$.
\end{remark}
\begin{remark}\label{spec_rem}
For a fixed $\omega\in\Omega$, the definition domain  of $\Delta^{\mathcal{V}}_{B_N}$ is a subset of the Sobolev space $H^2(M,\mathbb{R}^{[0,+\infty[})$. If~$M$ is compact, then the eigenvectors of $\Delta^{\mathcal{V}}_{B_N}$ lie in $C^{\infty}(M,\mathbb{R}^{[0,+\infty[})$ and satisfy the Neumann boundary condition. Proposition \ref{spec} follows from standard elliptic spectral theory by means of an integration over $\Omega$.
\end{remark}

 The spectrum of the connection Laplacian under the Neumann boundary condition contains information about arbitrage possibilities in the market.
\begin{theorem}\label{spec_NFLVR}
The market model satisfies the NFLVR condition if and only if $0\in \text{spec}_d\left(\Delta^{{\mathcal V}}_{B_N}\right)$. The~harmonic sections parameterize the Radon--Nikodym derivative for the change of measure from the statistical to the risk-neutral measures.
\end{theorem}
\begin{proof}
The spectrum of the Laplacian under Neumann boundary conditions contains $0$ if and only if there exists a section $f$ such that
\begin{equation}
\nabla^{{\mathcal V}} f=0.
\end{equation}

According to Proposition \ref{nabla}, this is equivalent to
\begin{equation}
\frac{\partial f_t}{\partial x_j}+K_jf_t=0,
\end{equation}
for all $j=0,1,\dots,N$. This means that for $j=0$
\begin{equation}\label{eqR}
{\mathcal D}\log(f_t(x))-r_t^x=0,
\end{equation}
and for~$j=1,\dots,N$,
\begin{equation}\label{eqD}
\frac{\partial \log(f_t(x))}{\partial x_j}=-\frac{D_t^j}{D_t^x},
\end{equation}
\noindent for all $x\in \mathfrak{X}$. Equation~(\ref{eqD}) becomes
\begin{equation}\label{eqDbis}
\begin{split}
&\frac{\partial \log(f_t(x))}{\partial x_j}=-\frac{\partial \log(D_t^x)}{\partial x_j}\\
&\\
&\frac{\partial \log(f_t(x))D_t^x}{\partial x_j}=0\\
&\\
&\log(f_t(x))D_t^x)\equiv C_t\\
&\\
&f_t(x)D_t^x\equiv\exp(C_t),
\end{split}
\end{equation}
\noindent for a process $(C_t)_{t\in[0,+\infty[}$. Therefore, the~positive process $\beta=(\beta_t)_{t\in[0,+\infty[}:=(\exp(-C_t))$ $_{t\in[0,+\infty[}$ satisfies
\begin{equation}\label{f}
f_t(x)=\frac{1}{\beta_tD_t^x},
\end{equation}
which, when inserted into Equation~(\ref{eqR}), leads to
\begin{equation}
\mathcal{D}\log(\beta_tD_t^x)+r_t^x=0,
\end{equation}
\noindent for all $t$ and $x$.

For fixed $\omega\in\Omega$, the Laplace operator has an elliptic symbol and by Weyl's theorem, any harmonic $f=f(\omega,t,x)$ is a smooth function of $(t,x)$. In~particular, any path of $f$ is c\`{a}dl\`{a}g with bounded variation, and hence $(f_t)_t$ is a semimartingale. According to \mbox{Equation~(\ref{f})}, $(D_t)_t$ being a semimartingale, it follows that $(\beta_t)_t$ is a semimartingale as well. Based on \mbox{Theorem \ref{holonomy}} this is equivalent to the NFLVR condition.
\end{proof}
\begin{remark}
Note that if $f=f(\omega,t,x)\equiv f(\omega)$, and~at least one of the components of $r$ or $D$ does not vanish, then $f=0$, $0\notin {\rm spec}\left(\Delta^{{\mathcal V}}_{B_N}\right)$, confirming and extending Remark \ref{rem_holonomy}.
\end{remark}

\begin{remark}Any harmonic $f=f_t(x)$ defines a risk-neutral measure by means of the Radon--Nikodym derivative:
\begin{equation}\label{Rad}
\frac{dP^*}{dP}=\frac{\beta_t}{\beta_0}=\frac{D_0^x}{D_t^x}\frac{f_0(x)}{f_t(x)},
\end{equation}
\noindent which does not depend on $x$.
\end{remark}
From Formula (\ref{Rad}), we derive the following corollary. 

\begin{corollary}\label{spec_Complete}
The market model is complete if and only if $0\in {\rm spec}(\Delta^{\mathcal{V}})_{B_N}$ is an eigenvalue with simple multiplicity.
\end{corollary}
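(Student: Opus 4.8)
The plan is to reduce the statement to the Second Fundamental Theorem of Asset Pricing by identifying the zero eigenspace $\ker\Delta=\ker\nabla$ with the set of equivalent martingale measures of the model. From the preceding Theorem we already know that $0\in\text{spec}(\Delta)$ is equivalent to (NFLVR), and that every harmonic section $f$ furnishes a pricing kernel $\beta$ through the reciprocal relation $f_t=1/(\beta_t D_t^x)$ of formula (\ref{f}); hence throughout we may assume the market is arbitrage-free, so that at least one harmonic section, and one risk neutral measure, exists. The corollary then becomes a statement comparing the dimension of this eigenspace with the cardinality of the family of risk neutral measures.

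First I would make the correspondence between harmonic sections and risk neutral measures precise as a map on rays. Using (\ref{Rad}), the Radon--Nikodym derivative $\frac{dP^*}{dP}=\frac{D_t^x}{D_0^x}\frac{f_0}{f_t}$ attached to a harmonic section is invariant under rescaling $f\mapsto cf$, since the constant cancels, and is independent of the nominals $x$. Thus proportional harmonic sections induce the same measure. Conversely, if $P^*_1=P^*_2$ then $\beta^1_t/\beta^1_0=\beta^2_t/\beta^2_0$ for all $t$, which forces $\beta^1=c\,\beta^2$ with $c$ constant, and therefore $f^1=c^{-1}f^2$. This yields a bijection between the one-dimensional subspaces (rays) of the linear space $\ker\Delta$ and the equivalent martingale measures of the model.

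Next I would invoke the Second Fundamental Theorem of Asset Pricing: an arbitrage-free market is complete if and only if the equivalent martingale measure is unique. Combining this with the ray bijection, completeness is equivalent to there being a single ray in $\ker\Delta$, i.e.\ to $\dim\ker\Delta=1$, which is exactly the assertion that $0$ is an eigenvalue of simple multiplicity. In one direction, a simple zero eigenvalue gives a unique harmonic section up to scaling, hence a unique risk neutral measure, hence completeness; in the other, completeness forces uniqueness of the measure, hence of the harmonic section up to scale, collapsing $\ker\Delta$ to a line.

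The main obstacle I anticipate is the faithful passage between the linear structure of the eigenspace and the convex structure of the set of martingale measures. Because the section-to-kernel map $f\mapsto\beta$ is a reciprocal, hence nonlinear, relation, an affine combination of two measures does not correspond to a linear combination of the associated harmonic sections, so the bijection must be checked on rays rather than on convex hulls. Concretely I must verify that positivity of $\beta$ singles out a genuine half-line inside each eigenray, that two linearly independent harmonic sections really produce two distinct and admissible positive pricing kernels when $\dim\ker\Delta\ge 2$, and that no further measures are hidden by convex mixing. Handling these positivity and non-degeneracy points carefully, rather than the eigenvalue bookkeeping itself, is where the real work lies.
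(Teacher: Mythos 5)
Your proposal is correct and takes essentially the same route as the paper: the paper derives this corollary directly from formula (\ref{Rad}), i.e.\ from the fact that harmonic sections parametrize the Radon--Nikodym derivatives of the equivalent martingale measures up to scaling, combined implicitly with the Second Fundamental Theorem of Asset Pricing (completeness $\Leftrightarrow$ uniqueness of the equivalent martingale measure). The positivity and ray-versus-convex-structure caveats you flag are genuine technical points, but they are left implicit in the paper as well, so your sketch is, if anything, more careful than the source.
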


\begin{remark}
The situation for the Dirichlet boundary condition is similar. The~proposition and remark analogous to Proposition \ref{spec} and Remark \ref{spec_rem} hold true.
But because of the unique continuation property for elliptic operators, $0$ never lies in ${\rm spec}_d\left(\Delta^{\mathcal{V}}_{B_D}\right)$ whether the NFLVR property is satisfied or not.
\end{remark}

\subsection{Arbitrage~Bubbles}
\begin{defi}[\textbf{Spectral Lower Bound}]
The highest spectral lower bound of the connection Laplacian on the cash flow bundle $\mathcal{V}$ is given by the following:
\begin{equation}
\lambda_0:=\inf_{\substack{\varphi\in C^{\infty}(M,\mathcal{V})\\\varphi\neq 0\\B_N(\varphi)=0}}\frac{(\nabla^{\mathcal{V}}\varphi, \nabla^{\mathcal{V}}\varphi)_{\mathcal{H}}}{(\varphi,\varphi)_{\mathcal{H}}},
\end{equation}
and it is assumed on the subspace
\begin{equation}
E_{\lambda_0}:=\left\{\varphi\left|\,\varphi\in C^{\infty}(M,\mathcal{V})\cap\mathcal{H}, \, B_N(\varphi)=0, (\nabla^{\mathcal{V}}\varphi, \, \nabla^{\mathcal{V}}\varphi)_{\mathcal{H}}\ge \lambda_0 (\varphi,\varphi)_{\mathcal{H}}\right.\right\}.
\end{equation}
The space
\begin{equation}
\mathcal{K}_{\lambda_0}:=\{\varphi\in E_{\lambda_0}\left|\,\varphi\ge0,\,\mathbb{E}[\varphi]=1\right.\}
\end{equation}
contains all candidates for the Radon--Nikodym derivative
\begin{equation}\label{risk_neutral_measure}
\frac{dP^*}{dP}=\varphi,
\end{equation}
for a probability measure $P^*$ absolutely continuous with respect to the statistical measure $P$.
\end{defi}

 By reformulating Theorem \ref{spec_NFLVR} and Corollary \ref{spec_Complete}, we obtain following statement.
\begin{proposition}
The market model satisfies the NFLVR condition if and only if $\lambda_0=0$. Therefore, there exist risk-neutral probability measures defined by (\ref{risk_neutral_measure}) with the corresponding $\varphi\in\mathcal{K}_{\lambda_0}$ such that $(D_t)_{t\in[0,T]}$ is a vector valued martingale with respect to $P^*$, i.e.,
\begin{equation}
\mathbb{E}^*_t[D_s]
=D_t\qquad\text{ for all }s\ge t \text{ in }[0,T].
\end{equation}
The market is complete if and only if $\lambda_0=0$ and $\dim E_{0}=1$.
\end{proposition}

 For arbitrage markets we have $\lambda_0>0$, and there exist no risk-neutral probability measures. Nevertheless, it is possible to define a fundamental value, although not in a unique way.
\begin{defi}[\textbf{Basic Assets' Arbitrage Fundamental Prices and Bubbles}]\label{arbitrage_bubble}
Let $(C_t)_{t\in[0,T]}$ be the ${\mathbb R}^N$ cash flow stream stochastic process associated with the $N$ assets of the market model with a given spectral lower bound $\lambda_0$ and the Radon--Nikodym subspace $\mathcal{K}_{\lambda_0}$.
For a given choice of $\varphi\in\mathcal{K}_{\lambda_0}$, the approximated fundamental value of the assets with a stochastic $\mathbb{R}^N$-valued price process $(S_t)_{t\in[0,T]}$ is defined as follows:
\begin{equation}
\boxed{
S_t^{*,\varphi}:=\mathbb{E}_t\left[\varphi\left(\int_t^\tau dC_u\,\exp\left(-\int_t^ur_s^0ds\right)+S_\tau\exp\left(-\int_t^\tau r_s^0ds\right)\,1_{\{\tau<+\infty\}}\right)\right]\,1_{\{t<\tau \}},
}
\end{equation}
\noindent where $\tau$ denotes the maturity time of all risky assets in the market model, and the~approximated bubble is defined as follows:
\begin{equation}
\boxed{
B_t^{\varphi}:=S_t-S_t^{*,\varphi}.
}
\end{equation}
The fundamental price vector for the assets and their asset bubble prices are defined as:
\begin{equation}\label{phi_0}
\boxed{
\begin{split}
S^*_t&:=\tilde{S}_t^{*,\varphi_0}\\
B_t&:=B_t^{\varphi_0}\\
\varphi_0&:=\arg\min_{\varphi\in \mathcal{K}_{\lambda_0}}\mathbb{E}_0\left[\int_0^Tds\,|B_s^{\varphi}|^2\right].
\end{split}
}
\end{equation}
The probability measure $P^*$ with the Radon--Nikodym derivative
\begin{equation}
\frac{dP^*}{dP}=\varphi_0
\end{equation}
is termed \textbf{minimal arbitrage measure}.
\end{defi}

\begin{proposition}
The assets' fundamental values can be expressed as conditional expectation with respect to the minimal arbitrage measure using the following formula:
\begin{equation}
\boxed{
S^*_t:=\mathbb{E}_t^*\left[\int_t^\tau dC_u\,\exp\left(-\int_t^ur_s^0ds\right)+S_\tau\exp\left(-\int_t^\tau r_s^0ds\right)\,1_{\{\tau<+\infty\}}\right]\,1_{\{t<\tau \}}.
}
\end{equation}
\end{proposition}

\begin{defi}[\textbf{Scalar curvature}] The market \textbf{integral scalar curvature} at time $t$ for the portfolio $x$ is defined as follows:
\begin{equation}
\mathcal{K}(t,x):=\mathcal{D}\log D^x_t+r^x_t.
\end{equation}
A strategy $x=(x_t)_{t\in[0,T]}$ is a \textbf{free lunch / no-arbitrage / rip-off strategy} if and only if
\begin{equation}
\boxed{
\mathcal{K}(t,x_t)\left\{
                    \begin{array}{ll}
                      >0 & \hbox{(free lunch)} \\
                      =0 & \hbox{(no-arbitrage)} \\
                      <0 & \hbox{(rip off)}
                    \end{array}
                  \right.
\quad \text{ for all}\quad t\in[0,T].
}
\end{equation}
The \textbf{vector valued integral curvature}  of the portfolio is defined as the vector of integral scalar curvatures for the portfolio single asset components:
\begin{equation}\label{vvcurvature}
\overrightarrow{\mathcal{K}}(t,x):=\sum_{j=1}^N\left(\mathcal{D}\log D^{x^je_j}_t+r^{x^je_j}_t\right)e_j.
\end{equation}
\end{defi}
\begin{remark}
The curvature $R$ defined in (\ref{defCurv}) 
  can be written as follows:
\begin{equation}
R(t,x,g)=g dt\wedge d_x\mathcal{K}(t,x),
\end{equation}
\noindent therefore justifying the nomenclature of the integral scalar curvature $\mathcal{K}$.
\end{remark}
We can now extend Jarrow--Protter--Shimbo's result in~\cite{JPS10} to obtain the following bubble decomposition theorem.
\begin{theorem}[\textbf{Bubble decomposition}]\label{arbitrage_bubble_contingent}
Let $\tau$ denote the maturity time of all risky assets in the market model. $S_t$ admits a unique (up to the $P$-evanescent set) decomposition
into a sum of fundamental and bubble values:
\begin{equation}
S_t=S^*_t+B_t,
\end{equation}
\noindent where $(B_t)_{t\in[0,T]}$ is a c\`{a}dl\`{a}g process satisfying
\begin{equation}\label{arbitrage_bubble_equation}
\boxed{
\begin{split}
&B_t=S_t+\\
&\qquad-\mathbb{E}_t\left[\varphi_0\left\{\int_t^\tau dC_u\exp\left(-\int_t^udsr^0_s\right)+\right.\right.\\
&\qquad\qquad\qquad\quad\left.\left.+S_t\exp\left(\int_t^\tau ds(\overrightarrow{\mathcal{K}}(s,e)-r^0_s)\right)1_{\{\tau<+\infty\}}\right\}\right]1_{\{t<\tau\}},
\end{split}
}
\end{equation}
\noindent where $e=[1,\dots,1]^\dagger$ or, equivalently, for~all $j=1,\dots,N$
\begin{equation}\label{bubble_value}
\boxed{
B_t^j=S_t^j-\mathbb{E}_t^*\left[\int_t^\tau dC_u^j\,\exp\left(-\int_t^uds\, r^0_s\right)+\exp\left(-\int_t^\tau ds\,r_s^0\right)S_{\tau}^j1_{\{\tau<+\infty\}}\right]\,1_{\{t<\tau\}}.
}
\end{equation}
If all asset maturities are finite, i.e.,~$\tau=T<+\infty$, then
\begin{equation}
\boxed{
\begin{split}
\overrightarrow{\mathcal{K}}_\cdot>r_\cdot^0\text{ and }C_\cdot>0&\Longrightarrow B_t\uparrow0^-\,(t\rightarrow T^-)\\
\overrightarrow{\mathcal{K}}_\cdot<r_\cdot^0\text{ and }C_\cdot<0&\Longrightarrow  B_t\downarrow0^+\,(t\rightarrow T^-)\\
\overrightarrow{\mathcal{K}}_\cdot=r_\cdot^0\text{ and }C_\cdot=0&\Longrightarrow  B_t\equiv0,
\end{split}
}
\end{equation}
\noindent where the inequalities and the limits are meant componentwise.
\end{theorem}

\begin{proof}
By developing the expression for the bubbles' values and utilizing the definition of deflators for all $t\ge0$
\begin{equation}
D_t=\exp\left(-\int_0^t ds\,r_s^0\right)S_t,
\end{equation}
\noindent since the curvature is the instantaneous asset portfolio log return (see~\cite{FaTa20})
\begin{equation}
D_\tau=D_t\exp\left(\int_t^\tau  ds\,\left(\mathcal{D}\log D_s+r_s\right)\right),
\end{equation}
\noindent we obtain
\begin{equation}
S_\tau=S_t\exp\left(\int_t^\tau  ds\,\left(\mathcal{D}\log D_s+r_s-r_s^0\right)\right),
\end{equation}
\noindent and hence,
\begin{equation}
\begin{split}
B_t&=S_t-\mathbb{E}_t\left[\varphi_0\left\{\int_t^\tau dC_u\,\exp\left(-\int_t^u ds\, r_s^0\right)+\right.\right.\\
&\qquad\qquad\left.\left.+S_{\tau}\exp\left(-\int_t^\tau ds\,  r_s^0\right)\,1_{\{\tau<+\infty\}}\right\}\right]\,1_{\{t<\tau\}}=\\
&=S_t-\mathbb{E}_t\left[\varphi_0\left\{\int_t^\tau dC_u\,\exp\left(-\int_t^u ds\, r^0_s\right)+\right.\right.\\
&\qquad\qquad+\left.\left.S_t\exp\left(\int_t^\tau  ds\,(\overrightarrow{\mathcal{K}}(s,e)-r^0_s)\right)1_{\{\tau<+\infty\}}\right\}\right]\,1_{\{t<\tau\}},
\end{split}
\end{equation}
\noindent which is (\ref{arbitrage_bubble_equation}), and for~finite $T$, it becomes\vspace{-8pt}
\begin{equation}\label{bubble_eq}
\begin{split}
B_t&=S_t-\mathbb{E}_t\left[\varphi_0\left(\int_t^T\, dC_u\,\exp\left(-\int_t^u ds\, r^0_s\right)+\right.\right.\\
&\qquad\qquad\left.\left.+S_t\exp\left(\int_t^T ds\,(\overrightarrow{\mathcal{K}}(s,e)-r^0_s)\right)\right)\right]=\\
&=S_t-S_t\underbrace{\mathbb{E}_t\left[\varphi_0\exp\left(\int_t^T ds\,(\overrightarrow{\mathcal{K}}(s,e)-r^0_s)\right)\right]}_{=:A_1(t,T)}+\\
&\qquad\qquad-\underbrace{\mathbb{E}_t\left[\varphi_0\left(\int_t^T\, dC_u\,\exp\left(-\int_t^u ds\, r_s^0\right)\right)\right]}_{=:A_2(t,T)}.
\end{split}
\end{equation}
Now, since $\mathbb{E}_0[\varphi_0]=1$, we observe that\vspace{-10pt}
\begin{equation}
\begin{split}
\overrightarrow{\mathcal{K}}_\cdot-r_\cdot^0>0,\,C_\cdot>0\Longrightarrow &A_1(t,T)>0, A_2(t,T)>0,\\
&  A_1(t,T)\downarrow1^+, A_2(t,T)\downarrow0^+\quad(t\rightarrow T^-)\\
\overrightarrow{\mathcal{K}}_\cdot-r_\cdot^0<0,\,C_\cdot<0\Longrightarrow &A_1(t,T)>0, A_2(t,T)<0,\\
& A_1(t,T)\uparrow1^+, A_2(t,T)\uparrow0^+\quad(t\rightarrow T^-),
\end{split}
\end{equation}
\noindent  from which we conclude according to (\ref{bubble_eq}) that
\begin{equation}
\begin{split}
\overrightarrow{\mathcal{K}}_t-r_t^0>0\text{ for }t\in[0,T],\, C_\cdot>0&\Longrightarrow B_t\uparrow0^-\;(t\rightarrow T^-)\\
\overrightarrow{\mathcal{K}}_t-r_t^0<0\text{ for }t\in[0,T],\, C_\cdot<0&\Longrightarrow  B_t\downarrow0^+\;(t\rightarrow T^-),
\end{split}
\end{equation}
\noindent and hence,
\begin{equation}
\overrightarrow{\mathcal{K}}_t-r_t\equiv 0\text{ and }C=0\Longrightarrow  B_t\equiv0.
\end{equation}
Inserting the definition (\ref{vvcurvature}) of $\overrightarrow{\mathcal{K}}$ into (\ref{arbitrage_bubble_equation}) leads to (\ref{bubble_value}). The~proof is now~completed.
\end{proof}

  We can now extend Jarrow--Protter--Shimbo's result in~\cite{JPS10} to obtain the following bubble classification theorem.
\begin{theorem}[\textbf{Bubble types}]
Let $T=+\infty$, and~denote as $\tau$ the maturity time of all risky assets in the market model. If~there exists a non-trivial bubble $B_t^j$ in an asset's price for $j=1,\dots,N$, then there exists at least one probability measure $P^*$ equivalent to $P$, for~which we have three and only three~possibilities:
\begin{itemize}
\item \textbf{Type1: } $B_t^j$ is local super- or submartingale with respect to both $P$ and $P^*$, if~$P[\tau=+\infty]>0$;
\item \textbf{Type2: } $B_t^j$ is local super- or submartingale  with respect to both $P$ and $P^*$, but~not uniformly integrable super- or submartingale, if~$B_t^j$ is unbounded but with $P[\tau<+\infty]=1$;
\item \textbf{Type3: } $B_t^j$ is a strict local super- or sub- $P$- and $P^*$-martingale, if~$\tau$ is a bounded stopping time.
\end{itemize}
\end{theorem}
\begin{proof}
This theorem is a direct consequence of a local application of Equation~(\ref{bubble_eq}), applied on time subintervals of $[0,T]$ on~which the signs of ${\mathcal K}(t,e_j)+r_t^0$ and of $C_t$ remain constant.
On the subintervals with non-negative ${\mathcal K}(t,e_j)$ and $C_t$, the~bubble price $B_t^j$ is a sub-martingale for both $P$ and $P^*$, because~for $t\le s$ that is near enough, $B_t\le B_s$ holds true;~thus,
\begin{equation}
B_t\le {\mathbb E}_t[B_s]\quad\text{and}\quad B_t\le {\mathbb E}_t^{*}[B_s].
\end{equation}

 On the subintervals with non-positive ${\mathcal K}(t,e_j)$ and $C_t$, the~bubble price $B_t^j$ is a super-martingale for both $P$ and $P^*$, because~for $t\le s$ that is near enough, $B_t\ge B_s$ holds true;~thus,
\begin{equation}
B_t\ge {\mathbb E}_t[B_s]\quad\text{and}\quad B_t\ge {\mathbb E}_t^{*}[B_s].
\end{equation}

 In the case of type $1$, there is a non-evanescent set of elementary events for which the maturity time of the assets is not finite, and~without further information, we do not know whether the stochastic integral until infinite converges. In~the case of type $2$, the set of elementary events for which the maturity time of the assets $\tau$ is not finite vanishes a.s., but~if $B_t^j$ is unbounded, we do not know if the stochastic integral over $[t, \tau]$ converges uniformly in $\tau$. In~the case of type $3$, the stochastic integral converges. The~proof is~completed.
\end{proof}
\begin{remark} Note that if the NFLVR is satisfied, then the curvature vanishes, and so do the bubbles of type $3$, these being trivial martingales for both the statistical and the risk-neutral probability measures.
\end{remark}

\begin{defi}[\textbf{Contingent Claim's Arbitrage Fundamental Price and Bubble}]
Let us consider in the context of Definition (\ref{arbitrage_bubble}) a European option given by the contingent claim  with a unique payoff $H(S_T)$ at time $T$ for an appropriate real-valued function $H$ of $N$ real variables. The fundamental price~of the contingent claim and its corresponding arbitrage bubble is defined in the case of base assets paying no dividends:
\begin{equation}\label{option_price}
\boxed{
\begin{split}
V^*_t(H)&:=\mathbb{E}_t\left[\varphi_0\exp\left(-\int_t^Tr_s^0ds\right)H(S_T)\,1_{\{T<+\infty\}}\right]1_{\{t<T\}}\\
&=\mathbb{E}^*\left[\exp\left(-\int_t^Tr_s^0ds\right)H(S_T)\,1_{\{T<+\infty\}}\right]1_{\{t<T\}}\\
B_t(H)&:=V_t(H)-V^*_t(H),
\end{split}
}
\end{equation}
\noindent where $\varphi_0$ is the minimizer for the basic assets bubbled defined in (\ref{phi_0}), $P^*$ the minimal arbitrage measure, and $(V_t(H))_{t\in[0,T]}$ is the price process of the European~option.\par
In the case of base assets paying dividends, the definition becomes the following:
\begin{equation}\label{option_price_div}
\boxed{
\begin{split}
V^*_t(H)&:=\mathbb{E}_t\left[\varphi_0\exp\left(-\int_t^Tr_s^0ds\right)H\left(S_T\exp\left(\frac{C_T}{S_T}(T-t)\right)\right)\,1_{\{T<+\infty\}}\right]1_{\{t<T\}}\\
&=\mathbb{E}^*\left[\exp\left(-\int_t^Tr_s^0ds\right)H\left(S_T\exp\left(\frac{C_T}{S_T}(T-t)\right)\right)\,1_{\{T<+\infty\}}\right]1_{\{t<T\}}\\
B_t(H)&:=V_t(H)-V^*_t(H),
\end{split}
}
\end{equation}
\noindent where $\frac{C_t^j}{S_t^j}$ is the instantaneous dividend rate for the $j$-th asset.
\end{defi}

\begin{remark}
If the market is complete, then $\lambda_0=0$ and $\mathcal{K}_{\lambda_0}=\{\varphi_0\}$, where $\varphi_0$ is the Radon--Nikodym derivative of the unique risk-neutral probability measure with respect to the statistical probability measure. The~definitions in (\ref{arbitrage_bubble}) and in (\ref{arbitrage_bubble_contingent}) for the complete market coincide with the definitions of fundamental value and asset bubble price for both base asset and contingent claim as introduced by Jarrow, Protter, and Shimbo in~\cite{JPS10}, proving that they are a natural extension to markets allowing for arbitrage opportunities.
\end{remark}
\begin{remark}
We see that the fundamental price of an asset defined via minimal arbitrage measure does share common characteristics with the real world pricing in the benchmark approach by Platen and Heath (see chapters 9 and 10 of~\cite{HePl06}).
\end{remark}
We now prove the put-call parity of fundamental prices.
\begin{proposition}[\textbf{Put-Call Parity for Fundamental Prices}]
Let us consider the market model with $N=1$ for the base assets (i.e., cash and one risky asset). Then, the~fundamental price~processes:
\begin{itemize}
\item $(C^*_t)_{t\in[0,T]}$ of a call option on $(S_t)_{t\in[0,T]}$ with strike price $K>0$ at time $T$;
\item $(P^*_t)_{t\in[0,T]}$ of a put option on $(S_t)_{t\in[0,T]}$ with strike price $K>0$ at time $T$;
\item $(F^*_t)_{t\in[0,T]}$ of a forward on $(S_t)_{t\in[0,T]}$ with forward price $K>0$ at time $T$;
\end{itemize}
satisfy the put-call-parity relation if the base asset pays no dividends.
\begin{equation}\label{put_call_parity}
\boxed{
C^*_t-P^*_t=F^*_t.
}
\end{equation}
\end{proposition}

\begin{proof}
For a positive strike price $K>0$, we can decompose the forward payoff as:
\begin{equation}\label{cf_dec}
S_T-K=(S_T-K)^+-(K-S_T)^+.
\end{equation}

 Formula (\ref{option_price}), in the case of no dividends ($C\equiv0$), reads as:
\begin{equation}
V^*_t(H)=\mathbb{E}_t\left[\varphi_0\exp\left(-\int_t^Tr_s^0ds\right)H(S_T)\,1_{\{T<+\infty\}}\right]1_{\{t<T\}},
\end{equation}
\noindent which, applied to (\ref{cf_dec}), leads to
\begin{equation}
V^*_t((S_T-K)^+)-V^*_t((K-S_T)^+)=V^*_t((S_T-K)),
\end{equation}
\noindent which is Equation~(\ref{put_call_parity}). The~proof is~completed.\\
\end{proof}
\begin{remark} The put-call parity for market prices may be violated even under the NFLVR assumption (see~\cite{Pr13}).
\end{remark}

  Finally, a~short direct computation shows the corollary below.
\begin{corollary}
The bubble discounted values for the base assets in Definition \ref{arbitrage_bubble}  and for the contingent claim on the base assets paying dividends in Definition \ref{arbitrage_bubble_contingent}
\begin{equation}
\widehat{B}_t:=\exp\left(-\int_0^tr_s^0ds\right)B_t,\qquad \widehat{B}(H)_t:=\exp\left(-\int_0^tr_s^0ds\right)B(H)_t
\end{equation}
\noindent satisfy the equalities
\begin{equation}
\boxed{
\begin{split}
&\widehat{B}_t^j =D_t^j-\left(\mathbb{E}_t^*\left[D_{\tau}^j 1_{\{\tau<+\infty\}}\right]+\mathbb{E}_t^*\left[\widehat{C}_\tau^j 1_{\{\tau<+\infty\}}\right]-\widehat{C}_t^j\right) 1_{\{t<\tau\}}\\
&\widehat{B}_t(H)=\widehat{V}_t(H)-\mathbb{E}_t^*\left[\widehat{H}\left(S_T\exp\left(\frac{C_T}{S_T}(T-t)\right)\right)1_{\{T<+\infty\}}\right]1_{\{t<T\}}.
\end{split}
}
\end{equation}
\noindent where
\begin{align}
\widehat{C}_t^j&:=\exp\left(-\int_0^tr_s^0ds\right)C_t^j\\
\widehat{H}&:=\exp\left(-\int_0^Tr_s^0ds\right)H\\
  \widehat{V}_t(H)&:=\exp\left(-\int_0^tr_s^0ds\right)V_t(H)
\end{align}
\noindent are the discounted cash flow for the $j$-th asset, the~discounted contingent claim payoff, and~the discounted value of the derivative.
\end{corollary}

\section{Topological Obstructions to~Arbitrage}\label{section4}
\unskip

\subsection{Topological Obstruction to Arbitrage Induced by the Gauss-Bonnet-Chern~Theorem}
We want to show that the exterior algebra bundle twisted with the cash flow bundle defined in Section \ref{cfbundle} can be given the structure of a Dirac bundle. Then, we apply the version of the Atiyah--Singer index theorem for this bundle, called the Gauss--Bonnet--Chern theorem, which relates the integral of the Euler form (i.e., the Pfaffian of the curvature of the connection) with a topological invariant of the bundle, the~Euler characteristic. First, we recall some basic definitions and examples about Dirac operators.
\begin{defi}\label{DiracBundle}The quadruple $(W,\langle \cdot,\cdot \rangle,\nabla,\gamma)$, where
  \begin{enumerate}
  \item[(i)] $W$ is a complex (real) vector bundle over the oriented Riemannian manifold $(M,g)$ with a
Hermitian (Riemannian) structure $\langle \cdot,\cdot \rangle $;
  \item[(ii)] $\nabla:C^\infty(M,W)\to C^\infty(M,T^*M\otimes W)$ is a connection on $M$;
  \item[(iii)] $\gamma: {\rm Cl}(M,g)\to {\rm Hom}(W)$ is a {\it real} algebra bundle homomorphism from
the Clifford bundle over $M$ to the {\it real} bundle of complex (real) endomorphisms of $W$, i.e.,~$W$ is a bundle of Clifford modules;
 \end{enumerate}
    is said to be a \textbf{Dirac bundle} if~the following conditions are~satisfied:
    \begin{enumerate}
    \item[(iv)] $\gamma(v)^*=-\gamma(v)$, $\forall v\in TM$, i.e.,~the Clifford multiplication
by tangent vectors is fiberwise skew-adjoint with respect to the Hermitian (Riemannian) structure $\langle \cdot,\cdot \rangle $;
    \item[(v)] $\nabla \langle \cdot,\cdot \rangle =0$, i.e.,~the connection is Leibnizian (Riemannian). In~other words, it satisfies the product rule:
\begin{equation}
        d\langle \varphi,\psi \rangle =\langle \nabla\varphi,\psi \rangle +\langle \varphi,\nabla\psi \rangle ,\quad\forall
\varphi,\psi\in C^\infty(M,W);
      \end{equation}
    \item[(vi)] $\nabla\gamma=0$, i.e.,~the connection is a module derivation. In~other words,
it satisfies the product rule:
\begin{equation}
\begin{split}
\nabla(\gamma(w)\varphi)=\gamma(\nabla^gw)\varphi+\gamma(w)\nabla\varphi,&\quad\forall\varphi,\psi\in C^\infty(M,W),  \\
 &\quad\forall w\in C^\infty(M, {\rm Cl}(M,g)).
\end{split}
\end{equation}
\end{enumerate}
\end{defi}

\begin{ex}{\bf (Exterior algebra bundle as a Dirac Bundle).}\label{forms}
Let $(M,g)$ be a $C^\infty$ Riemannian manifold of dimension $m$. The~tangent and the cotangent bundles are identified by the $\flat$-map defined by $v^{\flat}(w):=g(v,w)$. Its inverse is denoted as $\sharp$. The~exterior algebra can be seen as a Dirac bundle after the following~choices:
\begin{itemize}
  \item $W:=\Lambda(T^*M)=\bigoplus_{j=0}^m \Lambda^j(T^*M)$: exterior algebra over $M$;
  \item $\langle \cdot,\cdot \rangle $: Riemannian structure induced by $g$;
  \item $\nabla$: lift of the Levi Civita connection;
  \item By means of interior and exterior multiplication, $\text{int}(v)\varphi:=\varphi(v,\cdot)$ and
  $\text{ext}(v)\varphi:=v^{\flat}\wedge\varphi$, we can define the following:
\begin{equation}
    \begin{split}
        \gamma&:
            \begin{array}{lll}
                TM&\longrightarrow& Hom(W)\\
                v&\longmapsto&\gamma(v):=\text{ext}(v)-\text{int}(v).
            \end{array}
    \end{split}
  \end{equation}
  Recall that, since $\gamma^2(v)=-g(v,v){\bf 1}$, based on~the universal property, the~map
$\gamma$ extends uniquely to a real algebra bundle endomorphism $\gamma: {\rm Cl}(M,g)\longrightarrow {\rm Hom}(W)$.
\end{itemize}
\end{ex}

\begin{defi}
Let $(W,\left<\cdot,\cdot\right>,\nabla,\gamma)$ be a Dirac bundle over the
Riemannian manifold $(M,g)$. The~\textbf{Dirac operator}
$Q:C^\infty(M,W)\to C^\infty(M,W)$ is defined by
$Q:=\gamma\circ (\sharp\otimes\mathbb{1}) \circ\nabla$
\begin{equation}
    \begin{CD}
     {C^{\infty}(M,W)} @>{\nabla}>> {C^{\infty}(M,T^*M\otimes W)} \\
       @V{Q:=\gamma\circ(\sharp\otimes\mathbb{1})\circ\nabla}VV  @VV{\sharp\otimes\mathbb{1}}V \\
      {C^{\infty}(M,W)} @<{\gamma}<< {C^{\infty}(M,TM\otimes W)}
    \end{CD}
 \end{equation}

The square of the Dirac operator $P:=Q^2:C^\infty(M,W)\to
C^\infty(M,W)$ is called the \textbf{Dirac Laplacian}.
\end{defi}
\begin{defi}[\textbf{Dirac Complex}]\label{DC}
Let $Q$ be the Dirac operator for the Dirac bundle $W$
over the Riemannian manifold $(M,g)$ and $T\in {\rm Hom}(W)$. $(Q,T)$
is called a {\it Dirac complex} if and only if $T^2={\bf 1}$ and $QT=-TQ$.
We introduce the following notation:
\begin{equation}
\Pi_{\pm}:=\frac{{\bf 1} \mp T}{2} \qquad W_{\pm}:=\Pi_{\pm}(W)\qquad Q_{\pm}:=Q|_{C^{\infty}(M,W_{\pm})}.
\end{equation}
\end{defi}
\begin{remark}
The terminology introduced in Definition \ref{DC} is justified by the following~properties:
\begin{itemize}
\item $Q_{\pm}:C^\infty(M,W_{\pm})\longrightarrow C^\infty(M,W_{\mp})$;
\item $Q=\left[\begin{matrix}0&Q_{-}\\Q_{+}&0\end{matrix}\right]:C^\infty(M,\underbrace{W_{+}\oplus W_{-}}_{W})\longrightarrow C^\infty(M,\underbrace{W_{+}\oplus W_{-}}_{W})$;
\item the sequence
   \begin{equation*}
   \begin{CD}0@>>>{C^\infty(M,W_{+})}@>{Q_{+}}>>{C^\infty(M,W_{-})}@>{Q_{-}}>> {C^\infty(M,W_{+})} @>>> 0 \end{CD}
   \end{equation*}
 is a complex, i.e.,~$Q_{-}Q_{+}=0$.
\end{itemize}
\end{remark}

\begin{ex}{\bf (Exterior algebra bundle as a Dirac Bundle---Continuation).}\label{forms2}
The Dirac operator $Q=d+\delta$ is termed a Euler operator. We define the vector bundle isomorphism on the exterior algebra bundle $T$ as $T\eta:=(-1)^j\eta$ for $\eta\in\Lambda^j(T^*M)$ and extend it by linearity to $\Lambda(T^*M)$ in order to obtain the Dirac complex $(Q,T)$, termed the rolled-up De Rham complex.
\end{ex}

 The Dirac operator $Q$ is elliptic and symmetric as an operator on the Hilbert space $L^2(M,W)$, with the appropriate choice of domain of definition.
\begin{defi}[\bf{Analytical Index}] Let $(Q,T)$ be a Dirac complex over a compact manifold. If~$\partial M =\varnothing$, then the analytical index of the complex is defined as follows:
\begin{equation}
\text{Index}_a(Q,T):=\dim\ker(Q_{+})-\dim\ker(Q_{-}).
\end{equation}

If $\partial M \neq\varnothing$, and~there exists an elliptic boundary condition $B$ for which $Q_B$ is symmetric, then the analytical index of the complex with respect to this boundary condition is defined as follows:
\begin{equation}
\text{Index}_a(Q,T,B):=\dim\ker((Q_{+})_B)-\dim\ker((Q_{-})_B).
\end{equation}
\end{defi}
\begin{theorem}[\textbf{Atiyah--Patodi--Singer}] Let $(Q,T)$ be a Dirac complex over a compact manifold. If~ $\partial M =\varnothing$, then
\begin{equation}
\text{Index}_a(Q,T)=\text{Index}_t(Q,T),
\end{equation}
\noindent where $\text{Index}_t(Q,T)$ is a topological index, i.e.,~depending only on the topology of $M$ and $W$. If~ $\partial M \neq\varnothing$ and $B$ is an elliptic boundary condition, then
\begin{equation}\label{APS}
\text{Index}_a(Q,T,B)=\text{Index}_t(Q,T)+\textit{Boundary Term}(B).
\end{equation}
\end{theorem}
For generic Dirac bundles, the existence of local elliptic boundary conditions is not guaranteed. But~for the exterior algebra bundle, the absolute and relative boundary conditions are always local elliptic boundary conditions for the Euler operator. In~this case, the Atiyah--Singer index theorem takes the form of the Gauss--Bonnet--Chern theorem:
\begin{ex}{\bf (Exterior algebra bundle as a Dirac Bundle---Continuation).}\label{forms3}
In the boundaryless case we have (see~\cite{Gi95} page 179 and~\cite{BGV96} page 59)
\begin{equation}
(2\pi)^{\frac{m}{2}}\int_M\text{Pf}(-R^{\Lambda(T^*M)})=\chi(M),
\end{equation}
\noindent where $\text{Pf}(-R^{\Lambda(T^*M)})$, termed a Euler form, is the Pfaffian of the exterior algebra curvature and $\chi(M)$ is the Euler characteristic, which is a topological invariant of the manifold defined as:
\begin{equation}
\chi(M):=\sum_{j=0}^mb_j(M),
\end{equation}
\noindent where $b_j(M):=\dim H^j(M)$ is the $j$-th Betti number of $M$, the~dimension of the $j$-the De Rham absolute cohomology group, which is isomorphic to the $j$-th homology group. In~the boundaryless case, absolute and relative cohomology coincide, the~Hodge star operator defines an isomorphism between the $j$-th and the $m-j$-th De Rham cohomology, and hence,  the~Euler characteristic always vanishes if $m$ is~odd.

If the manifold $M$ has a boundary on which we impose the absolute (or relative) boundary condition, the~equality (\ref{APS}) becomes
\begin{equation}\label{GBC}
(2\pi)^{\frac{m}{2}}\int_M\text{Pf}(-R^{\Lambda(T^*M)})=\chi(M)+\int_{\partial M}\Phi(R^{\Lambda(T^*M)}, L(\partial M,M)))d\text{vol}_{\partial M},
\end{equation}
\noindent where $\Phi$ is a function of the curvature $R^{\Lambda(T^*M)}$ and of the second fundamental form $L$ of the embedding $\partial M\rightarrow M$ (see~\cite{Gi95} page 201). Note that in the case of a manifold with a boundary case, the Euler characteristic does not have to vanish if $m$ is odd.
\end{ex}
Next, we can apply the definitions above to recognize a Dirac bundle containing all the information required by the market model. In~the space of all possible strategies $M:=[0,T]\times\mathfrak{X}\subset\mathbb{R}^{N+1}$, now we introduce the Riemannian structure induced by the Euclidean metric in $\mathbb{R}^{N+1}$.
The cash flow bundle $\mathcal{V}$ has infinite rank and is therefore an unfavourable candidate for the Atiyah--Singer/Gauss--Bonnet--Chern theorem. Therefore, we choose a partition $t_0:=0<t_1<\dots<t_{n-1}<t_n:=T$ of the interval $[0,t]$ such that its mesh $\max_{1\le i \le n}(t_{i}-t_{i-1})\rightarrow 0\;(n\rightarrow+\infty)$ holds true. We repeat the construction of the market fibre bundle $\mathcal{B}_n$ in Definition \ref{MFB}, choosing as structure group $G_n:=\{\pi\in G\,|\,{\rm supp}(\pi)\subset\{t_0,\dots,t_n\}\}$ with Lie algebra $\mathfrak{g}_n=\mathbb{R}^{n+1}$, and~the construction of the cash flow bundle ${\mathcal V}_n$ in Section \ref{cfbundle}, choosing as fibre $V_n:=\mathbb{R}^{n+1}$. Note that in this case, by~standard elliptic theory, the~eigenspaces of the connection Laplacian under the Neumann boundary condition are all finite dimensional.  In~particular, there are finitely many linear, independent  Radon--Nikodym derivatives that can perform a change of measure from the statistical to a risk-neutral~one.\par
The exterior algebra bundle over $M$ twisted with the cash flow bundle ${\mathcal V}_n$ is given the structure of a Dirac bundle with the following~choices:
\begin{itemize}
\item $g$: restriction of the Euclidean metric;
  \item ${W_n}:=\Lambda(T^*M)\otimes\mathcal{V}_n$: twisted bundle of finite rank $(n+1)2^{N+1}$;
  \item $\langle \eta_1\otimes v_1,\eta_2\otimes v_2\rangle^{W_n}:=\langle \eta_1,\eta_2\rangle ^{\Lambda(T^*M)}\langle v_1,v_2\rangle^{\mathcal{V}_n}$: Riemannian structure;
  \item $\nabla^{W_n}:=\nabla^{\Lambda(T^*M)}\otimes {\bf 1}_{{\mathcal V}_n}+{\bf 1}_{\Lambda(T^*M)}\otimes\nabla^{{\mathcal V}_n}$: connection;
  \item $\gamma^{{W_n}}:=\gamma^{\Lambda(T^*M)}\otimes {\bf 1}_{{\mathcal V}_n}$: real algebra bundle endomorphism $\gamma:{\rm Cl}(M,g)\longrightarrow {\rm Hom}({W_n})$;
  \item $T^{{W_n}}:=T^{\Lambda(T^*M)}\otimes {\bf 1}_{{\mathcal V}_n}\in {\rm Hom}({W_n})$: a symmetry anticommuting with the Dirac operator.
\end{itemize}

 Properties (iv)--(vi) of Definition \ref{DiracBundle} are satisfied, as highlighted on page 226 of~\cite{Gi95}, and~$T^{{W_n}}$ defines via Definition \ref{DC} a Dirac complex. Since for the curvature we have the following:
\begin{equation}
R^{{W_n}}=R^{\Lambda(T^*M)}\otimes {\bf 1}_{{\mathcal V}_n}+{\bf 1}_{\Lambda(T^*M)}\otimes R^{{\mathcal V}_n}={\bf 1}_{\Lambda(T^*M)}\otimes R^{{\mathcal V}_n},
\end{equation}
because $M$ carries the flat Euclidean metric ($R^{\Lambda(T^*M)}=0$), the~Gauss--Bonnet--Chern theorem reads as follows:
\begin{align}
&(2\pi)^{\frac{N+1}{2}}\int_M\text{Pf}(-{\bf 1}_{\Lambda(T^*M)}\otimes R^{{\mathcal V}_n}) \nonumber \\
& \hspace{1cm} =\text{Rank}({\mathcal V}_n)\left[\chi(M)+\int_{\partial M}\Phi(R^{\Lambda(T^*M)}, L(\partial M,M)))d\text{vol}_{\partial M}\right].
\end{align}\label{Obs0}

The integrand of the boundary term vanishes because the Riemannian curvature is zero, as~one can see in~\cite{Gi95} page 199. Therefore,
\begin{equation}\label{Obs1}
(2\pi)^{\frac{N+1}{2}}\int_M\text{Pf}(-{\bf 1}_{\Lambda(T^*M)}\otimes R^{{\mathcal V}_n})=\text{Rank}({\mathcal V}_n)\chi(M).
\end{equation}

Utilizing the fact that
\begin{equation}
\chi(M)=\chi([0,T]\times\mathfrak{X})=\chi(\mathfrak{X}),
\end{equation}
and inserting the value for the cash flow bundle rank, we obtain for all $n\in\mathbb{N}_1$
\begin{equation}\label{Obs2}
\boxed{
\frac{1}{n+1}\left(\frac{\pi}{2}\right)^{\frac{N+1}{2}}\int_M\text{Pf}(-{\bf 1}_{\Lambda(T^*M)}\otimes R^{{\mathcal V}_n})=\chi(\mathfrak{X}),
}
\end{equation}
from which we see that a non-vanishing Euler characteristic of the space of all possible nominals is a topological obstruction for the market model to be arbitrage-free. To~summarize:
\begin{theorem}
If a market model has a bounded space of asset nominals $\mathfrak{X}$, whose Euler characteristic $\chi(\mathfrak{X})$ does not vanish, then the zero curvature condition (ZC), or equivalently the (NUPBR) condition, and a~fortiori the~NFLVR condition, cannot be satisfied:
\begin{equation}\label{char}
\text{(NFLVR)}\Rightarrow\text{(NUPBR)}\Rightarrow\chi(\mathfrak{X})=0.
\end{equation}
\end{theorem}

\begin{remark}
Formula (\ref{Obs2}) says that the total quantity of arbitrage allowed by a market over all asset strategies is a topological invariant of the asset nominal space.
\end{remark}

\begin{remark}
The Euler characteristic of the nominal space $\chi(\mathfrak{X})$ as topological obstruction to NFLVR is consistent with the results in~\cite{Fa15}, where the first homotopy group $\Pi_1(\mathfrak{X},D)$  must be trivial if NFLVR holds true.
\end{remark}

\begin{remark} There is a more general version of the index theorem which holds true for any elliptic operator on a vector bundle over a manifold with boundary that has an elliptic boundary condition, which is  a generalization of the Atiyah--Singer--Patodi index theorem. For~the connection Laplacian $P:={\nabla^{{\mathcal V}}}^*\nabla^{{\mathcal V}}$ over the cash flow bundle ${\mathcal V}$ under the Neumann boundary condition $B_N$, it has the following form:
\begin{equation}\label{indexP}
\text{Index}_a(P,B_N)=\text{Index}_t(P)+\textit{Boundary Term}(P,B_N).
\end{equation}

 Do we expect from (\ref{indexP}) another topological obstruction to arbitrage than the one induced by the Gauss--Bonnet--Chern theorem? We want to answer this question by means of qualitative reasoning. We observe~that:
\begin{itemize}
\item $\text{Index}_a(P,B_N)$ is an integral over $M$ of an expression depending on the derivatives of Christoffel's symbols for the connection $\nabla^{{\mathcal V}}$;
\item $\text{Index}_t(P,B_N)$ is a topological invariant of the manifold $M$;
\item $\textit{Boundary Term}(P,B_N)=0$, because~of the choice of the boundary condition and Green's formula.
\end{itemize}

 Therefore, the~information contained in (\ref{indexP}) is essentially the same as that in (\ref{GBC}), and thus, we expect no additional insights.
\end{remark}

\subsection{Topological Obstruction to Arbitrage Induced by the Bochner--Weitzenb\"ock~Theorem}
Let us now introduce two local elliptic boundary condition which guarantee self-adjointness  to the Dirac operator and the Dirac Laplacian on $\Lambda(T^*M)$:

\begin{itemize}
\item \textbf{Absolute~boundary~condition:} 
\begin{equation}
\begin{split}
B^0_{\text{abs}}(f)&:=(\text{int}(\nu)(f))|_{\partial M}\\
B^1_{\text{abs}}(f)&:=B^0_{\text{abs}}(f)\oplus B^0_{\text{abs}}(Q^{\Lambda(T^*M)}f),
\end{split}
\end{equation}
where the operation int is the interior multiplication in $\Lambda(T^*M)$, defined as
\begin{equation}
\text{int}(v)(f):=f(v,\cdot).
\end{equation}
\item \textbf{Relative boundary condition:}
\begin{equation}
\begin{split}
B^0_{\text{rel}}(f)&:=(\text{ext}(\nu)(f))|_{\partial M}\\
B^1_{\text{rel}}(f)&:=B^0_{\text{rel}}(f)\oplus B^0_{\text{rel}}(Q^{\Lambda(T^*M)}f),
\end{split}
\end{equation}
where the operation ext is the exterior multiplication in $\Lambda(T^*M)$, defined as
\begin{equation}
\text{ext}(v)(f):=v^{\#}\wedge  f,
\end{equation}
\noindent and $\#$ is the isomorphism between $TM$ and $T^*M$.
\end{itemize}

\begin{theorem}\label{equiv_NFLVR_2}
The following equations hold:
\begin{equation}\label{BW}
\boxed{
(Q^W_{B^0_{\text{abs}}\otimes B_N})^2=({\nabla^W}^*\nabla^W)_{B^1_{\text{abs}}\otimes B_N}.
}
\end{equation}

 Moreover, the~no-free-lunch-with-vanishing-risk condition is satisfied if and only if the Dirac Laplacian admits harmonic sections of the cash flow bundle:
 \vspace{12pt}
\begin{equation}\label{equiv_NFLVR}
\boxed{
\text{(NFLVR)}\Longleftrightarrow 0\in {\rm spec}_d\left((Q^W)^2_{B^1_{\text{abs}}\otimes B_N}\right).
}
\end{equation}
\end{theorem} 
\begin{proof}
The first equation in (\ref{BW}) follows from the Bochner--Weitzenb\"ock formula, which holds true for any Dirac bundle (cf.~\cite{LaMi89} page 155):
\begin{equation}\label{BWT}
(Q^W_{B^0_{\text{abs}}\otimes B_N})^2=({\nabla^W}^*\nabla^W)_{B^1_{\text{abs}}\otimes B_N}+\mathcal{R}^W,
\end{equation}
\noindent where
\begin{equation}
\mathcal{R}^W:=\sum_{\mu,\nu=0}^N\gamma^W(e_\mu)\gamma^W(e_\nu)R^W(e_\mu,e_\nu)
\end{equation}
\noindent is the curvature homomorphism on the vector bundle $W$ and is independent of the choice of an o.n. frame of $TM$. Inserting
\begin{equation}
R^W={\bf 1}^{\Lambda(T^*M)}\otimes R^{\mathcal{V}}
\end{equation}
\noindent into Equation~(\ref{BWT}), and decomposing a section $\psi$ of the twisted vector bundle as
\begin{equation}
\psi=\sum_{i,j}a_{i,j}c_i\otimes f_j,
\end{equation}
\noindent where $\{f_j\}_j$ an o.n.b of $L^2(\Omega, \mathcal{V}, \mathcal{A}_0,dP)$, and $\{c_i\}_i$ is an o.n.b of $L^2(\Omega, \Lambda(T^*M), \mathcal{A}_0,dP)$
of eigenvectors of $\gamma^{\Lambda(T^*M)}(e_{\mu})$ such that for~all $\mu$,
\begin{equation}
\gamma^{\Lambda(T^*M)}(e_{\mu})c_i=(-1)^i\imath,
\end{equation}
\noindent we obtain
\begin{equation}
\begin{split}
(\mathcal{R}^W\psi,\psi)&=-\sum_{\substack{i,j,k,l \\ \mu\neq\nu}}a_{i,j}a_{k,l}(-1)^{\mu+\nu}\delta_{i,k}(R^{\mathcal{V}}(e_\mu,e_\nu)f_j,f_l)\\
&=\sum_{\substack{i,j,k,l \\ \nu\neq\mu}}a_{i,j}a_{k,l}(-1)^{\nu+\mu}\delta_{i,k}(R^{\mathcal{V}}(e_\nu,e_\mu)f_j,f_l)\\
&=-(\mathcal{R}^W\psi,\psi),
\end{split}
\end{equation}
\noindent and hence, for~any $\psi$
\begin{equation}
(\mathcal{R}^W\psi,\psi)=0.
\end{equation}

Based on the polarization identity, we infer $\mathcal{R}^W=0$, and Equation~(\ref{BW}) follows.
For the proof of (\ref{equiv_NFLVR}), we~have:
\begin{itemize}
\item[$\Rightarrow:$] If NFLVR holds, then with Theorem \ref{spec_NFLVR} there exists an $f\in\text{dom}\left(\Delta^{\mathcal{V}}_{B_N}\right)$ such that $\Delta^{\mathcal{V}} f=0$, and~$\nabla^{\mathcal{V}}=0$. Let us consider the section $\psi:=c\otimes f$, where $c\in\mathbb{R}$ is a constant. With~Equation~(\ref{BW}), we infer via integration by parts
\begin{equation}
     ({Q^W}^2\psi,\psi)=({\nabla^W}^*\nabla^W\psi,\psi)=(\nabla^W\psi,\nabla^W\psi) = c^2(\nabla^{\mathcal{V}}f,\nabla^{\mathcal{V}}f)=0,
    \end{equation}
    \noindent and the proof in this direction is complete.
\item[$\Leftarrow:$] If there is a $\psi\neq0$ in $\ker\left(Q^W_{B^0_{\text{abs}}\otimes B_N})^2\right)$, then
\begin{equation}
    \psi=\bigoplus_{i,j}a_{i,j}c_i\otimes f_j,
    \end{equation}
    where $c_i\in\Lambda^i(T^*M)$, because~the Laplace--Beltrami operator maintains the degree of differential forms.
    Therefore, there is at least a pair $(i,j)$ such that for $c_i\neq0$ and $f_j\neq0$
\begin{equation}
    {Q^W}^2c_i\otimes f_j=0,
    \end{equation}
    \noindent and based on~Equation~(\ref{BW}),
\begin{equation}
    \begin{split}
    0&=({Q^W}^2c_i\otimes f_j,c_i\otimes f_j)=(\nabla^W(c_i\otimes f_j),\nabla^W(c_i\otimes f_j))\\
    &=(\nabla^{\Lambda^i(T^*M)}c_i,\nabla^{\Lambda^i(T^*M)}c_i)(\nabla^{\mathcal{V}}f_j,\nabla^{\mathcal{V}}f_j),
    \end{split}
    \end{equation}
    \noindent and therefore,  $\nabla^{\mathcal{V}}f_j=0$ for at least a $j$. Hence, $\Delta^{\mathcal{V}}f_j=0$, and~NFLVR follows from Theorem \ref{spec_NFLVR}.
\end{itemize}\end{proof}

\begin{remark}
Theorem \ref{equiv_NFLVR_2} holds true if we replace the absolute with the relative boundary condition.
\end{remark}

\begin{corollary}\label{homology}
The no-free-lunch-with-vanishing-risk condition is satisfied if and only if $H_*(M,{\mathcal V})$, the~homology group of the vector bundle $V$ over $M$, does not vanish:
\begin{equation}
\boxed{
\text{(NFLVR)}\Longleftrightarrow H_*(M,{\mathcal V})\neq\{0\}.
}
\end{equation}
\end{corollary}
\begin{proof}
According to Theorem \ref{equiv_NFLVR_2}, NFLVR and the isomorphism between the cohomology group and the kernel of $(Q^W_{B^0_{\text{abs}}\otimes B_N})^2$
\begingroup\makeatletter\def\f@size{9.5}\check@mathfonts
\def\maketag@@@#1{\hbox{\m@th\normalsize\normalfont#1}}%
\begin{equation}
  H^*(M,\Lambda(T^*M)\otimes {\mathcal V})=\bigoplus_{j=0}^NH(M,\Lambda^j(T^*M)\otimes H^*(M,{\mathcal V})=\bigoplus_{j=0}^NH^j(M)\otimes H^*(M,{\mathcal V}),
\end{equation}
\endgroup
\noindent where $H^j(M,\mathbb{R})$ is the $j$th cohomology group of $M$ and $H^*(M,{\mathcal V})$ the cohomology group of the vector bundle $V$ over $M$.
We see that NFLVR  is satisfied if and only if there exists at least a $j=0,\dots,N+1$ such that
\begin{equation}
H^j(M,\mathbb{R})\otimes H^*(M,{\mathcal V}),
\end{equation}

 Since $H^0(M)\neq\{0\}$ because the constant functions are eigenvectors of the Laplace--Beltrami operator under the absolute boundary condition for the eigenvalue $0$, this condition is satisfied if and only if
the cohomology group of the cash flow bundle does not vanish. Since the cohomology group is isomorphic to the homology group, the~proof is~complete.
\end{proof}

\begin{remark}
Corollary \ref{homology} states that the vanishing of the homology group of the cash flow bundle is a topological obstruction to the no-free-lunch-with-vanishing-risk condition.
\end{remark}

\section{Conclusions}\label{section5}
By introducing an appropriate stochastic differential geometric
formalism, the~classical theory of stochastic finance can be
embedded into a conceptual framework called Geometric
Arbitrage Theory, where the market is modelled with a principal
fibre bundle with a connection and arbitrage corresponding to its curvature. The associated vector bundle, termed cash flow bundle, carries a covariant differentiation induced by the connection.
The presence of the eigenvalue $0$ in the spectrum of the connection Laplacian on the cash flow bundle, or of the Dirac Laplacian on the cash flow bundle twisted with the exterior algebra bundle,
characterizes the fulfilment of the no-free-lunch-with-vanishing-risk condition for the market model. We extend the Jarrow--Protter--Shimbo bubble theory to markets allowing for arbitrage and highlight the connections with the Platen--Heath real world pricing of the benchmark approach.
The non-vanishing of the Euler characteristic of the asset nominal space and the non-vanishing of the homology group of the cash flow bundle are \textit{topological} obstructions to the fulfillment of the NFLVR
condition.\par As a result, we have justified the title of this paper, which is an adaptation of Ka\v{c}'s famous question that introduced spectral inverse problems for self-adjoint operators on manifolds: \textit{Can you hear the shape of a drum?} We do hope that this approach will be appreciated by the mathematical finance community. 

\appendix
\setcounter{section}{0}
\section{Generalized Derivatives of Stochastic~Processes}\label{Derivatives}
In stochastic differential geometry, one would like to lift
the constructions of stochastic analysis from open subsets of
$\mathbf{R}^N$ to  $N$ dimensional differentiable manifolds. To~that
aim, chart invariant definitions are needed, and hence, a stochastic
calculus satisfying the usual chain rule instead of It\^{o}'s Lemma is required,
(cf.~\cite{HaTh94}, Chapter 7, and~the remark in Chapter 4 at the
beginning of page 200). That is why the papers about geometric arbitrage theory are mainly concerned
 with stochastic integrals and derivatives meant in \textit{Stratonovich}'s
sense and not in \textit{It\^{o}}'s. Of~course, at~the end of the computation, Stratonovich integrals can be transformed into It\^{o}'s.
Note that a fundamental portfolio equation, the~self-financing condition, cannot be directly formally expressed with Stratonovich integrals; it must~first be expressed with It\^{o}'s and then transformed into Stratonovich's because~it is a non-anticipative condition.
\begin{defi}\label{Nelson}
Let $I$ be a real interval and $Q=(Q_t)_{t\in I}$ be a  $\mathbf{R}^N$-valued stochastic process on the probability space
$(\Omega, \mathcal{A}, P)$. The~process $Q$ determines three families of $\sigma$-subalgebras of the $\sigma$-algebra $\mathcal{A}$:
\begin{itemize}
\item[(i)] ``Past'' $\mathcal{P}_t$, generated by the preimages of Borel sets in $\mathbf{R}^N$  by all mappings $Q_s:\Omega\rightarrow\mathbf{R}^N$ for $0<s<t$;
\item[(ii)] ``Future'' $\mathcal{F}_t$, generated by the preimages of Borel sets in $\mathbf{R}^N$  by all mappings $Q_s:\Omega\rightarrow\mathbf{R}^N$ for $0<t<s$;
\item[(iii)] ``Present'' $\mathcal{N}_t$, generated by the preimages of Borel sets in $\mathbf{R}^N$  by the mapping $Q_s:\Omega\rightarrow\mathbf{R}^N$.
\end{itemize}

Let $Q=(Q_t)_{t\in I}$ be continuous.
 Assuming that the following limits exist,
\textbf{Nelson's stochastic derivatives} are defined as follows:
\begin{equation}
\boxed{
\begin{split}
&\mathfrak{D}Q_t:=\lim_{h\rightarrow
0^+}\mathbb{E}\Br{\left.\frac{Q_{t+h}-Q_t}{h}\right|
\mathcal{P}_t}\text{: forward derivative,}\\
& \mathfrak{D}_*Q_t:=\lim_{h\rightarrow
0^+}\mathbb{E}\Br{\left.\frac{Q_{t}-Q_{t-h}}{h}\right|
\mathcal{F}_{t}}\text{: backward derivative,}\\
&\mathcal{D}Q_t:=\frac{\mathfrak{D}Q_t+\mathfrak{D}_*Q_t}{2}\text{: mean derivative}.
\end{split}
}
\end{equation}

Let $\mathcal{S}^1(I)$ be the set of all processes $Q$ such that
$t\mapsto Q_t$, $t\mapsto \mathfrak{D}Q_t$ and $t\mapsto \mathfrak{D}_*Q_t$ are continuous
mappings from $I$ to $L^2(\Omega, \mathcal{A})$. Let
$\mathcal{C}^1(I)$ be the completion of $\mathcal{S}^1(I)$ with respect
to the norm
\begin{equation}
\boxed{
\|Q\|:=\sup_{t\in I}\br{\|Q_t\|_{L^2(\Omega, \mathcal{A})}+\|\mathfrak{D}Q_t\|_{L^2(\Omega, \mathcal{A})}+\|\mathfrak{D}_*Q_t\|_{L^2(\Omega, \mathcal{A})}}.
}
\end{equation}
\end{defi}

\begin{remark}
The stochastic derivatives $\mathfrak{D}$, $\mathfrak{D}_*$, and  $\mathcal{D}$
correspond to It\^{o}'s, to~the anticipative, and, respectively,  to~Stratonovich's integral (cf.~\cite{Gl11}). The~process space $\mathcal{C}^1(I)$ contains all of It\^{o} processes. If~$Q$ is a Markov process, then the sigma algebras $\mathcal{P}_t$ (``past'') and $\mathcal{F}_t$ (``future'') in the definitions of forward and backward derivatives can be substituted by the sigma algebra $\mathcal{N}_t$ (``present'') (see Chapter 6.1 and 8.1 in \cite{Gl11}).
\end{remark}

  Stochastic derivatives can be defined pointwise in $\omega\in\Omega$ outside the class $\mathcal{C}^1$ in terms of generalized functions.
\begin{defi}
Let $Q:I\times\Omega\rightarrow\mathbf{R}^N$ be a continuous linear functional in the test processes $\varphi:I\times\Omega\rightarrow\mathbf{R}^N$ for $\varphi(\cdot,\omega)\in C^{\infty}_c(I,\mathbb{R}^N)$. This means that for a fixed $\omega\in\Omega$, the functional $Q(\cdot,\omega)\in\mathcal{D}(I,\mathbb{R}^N)$, the~topological vector space of continuous distributions. We can then define
\textbf{Nelson's generalized stochastic derivatives:}
\begin{equation}
\boxed{
\begin{split}
&\mathfrak{D}Q(\varphi_t):=-Q(\mathfrak{D}\varphi_t)\text{: forward generalized derivative;}\\
& \mathfrak{D}_*Q(\varphi_t):=-Q(\mathfrak{D}_*\varphi_t)\text{: backward generalized derivative;}\\
&\mathcal{D}(\varphi_t):=-Q(\mathcal{D}\varphi_t)\text{: mean generalized derivative}.
\end{split}
}
\end{equation}
\end{defi}
If the generalized derivative is regular, then the process has a derivative in the classic sense. This construction is nothing other than a straightforward pathwise lift of the theory of generalized functions to a wider class of stochastic processes which do not a priori allow for Nelson's derivatives in the strong sense. We will utilize this feature in the treatment of credit risk, where many processes with jumps~occur.

\end{document}